\pgfplotsset{compat=newest} 
\pgfplotsset{plot coordinates/math parser=false}
\pgfplotsset{table/search path={figures/data}}
\newtheorem{assumption}{Assumption}
\newtheorem{theorem}{Theorem}
\newtheorem{lemma}{Lemma}
\let\NAT@parse\undefined
\definecolor{highlight}{rgb}{1,1,0.8}
\definecolor{DarkOrange}{rgb}{0.8,0.3,0}
\newcommand{\bs}[1]{\boldsymbol{#1}}
\newcommand{\mat}[1]{\mathbf{#1}}
\newcommand{\pose}{\bs{\eta}}
\newcommand{\vel}{\bs{\nu}}
\newcommand{\ocean}{\mat{V}_c}
\newcommand{\oceanhat}{\hat{\mat{V}}_c}
\newcommand{\oceantilde}{\tilde{\mat{V}}_c}
\newcommand{\mass}{\mat{M}}
\newcommand{\cor}{\mat{C}\left(\bs{\nu}_r\right)}
\newcommand{\drag}{\mat{D}\left(\bs{\nu}_r\right)}
\newcommand{\grav}{\mat{g}\left(\pose\right)}
\newcommand{\rot}{\mat{R}\left(\theta, \psi\right)}
\newcommand{\trans}{\mat{T}\left(\theta\right)}
\newcommand{\transform}{\mat{J}\left(\pose\right)}
\newcommand{\T}{^{\rm T}}
\newcommand{\abs}[1]{\left|#1\right|}
\newcommand{\norm}[1]{\left\|#1\right\|}
\newcommand*{\scale}[2][1]{\scalebox{#1}{\ensuremath{#2}}}
\title{\LARGE \bf Formation Path Following Control of Underactuated AUVs -- With Proofs}
\author{Josef Matou\v{s}\quad Kristin Y. Pettersen\quad Claudio Paliotta
\thanks{Josef Matou\v{s} and Kristin Y. Pettersen are with the Centre for Autonomous Marine Operations and Systems, Department of Engineering Cybernetics, Norwegian University of Science and Technology (NTNU), Trondheim, Norway.
        {\fontfamily{qcr}\selectfont \{josef.matous, kristin.y.pettersen\}@ntnu.no}.
Claudio Paliotta is with SINTEF Digital, Trondheim, Norway
        {\fontfamily{qcr}\selectfont claudio.paliotta@sintef.no}%
}
}
\begin{document}
    \maketitle
    \thispagestyle{empty}
    \pagestyle{empty}

    \begin{abstract}
        This paper proposes a novel method for formation path following of multiple underactuated autonomous underwater vehicles.
        The method combines line-of-sight guidance with null-space-based behavioral control, allowing the vehicles to follow curved paths while maintaining the desired formation.
        We investigate the dynamics of the path-following error using cascaded systems theory, and show that the closed-loop system is uniformly semi-globally exponentially stable.
        We validate the theoretical results through numerical simulations.
    \end{abstract}

    \section{Introduction}
    Autonomous underwater vehicles (AUVs) are being increasingly used in a number of applications such as transportation, seafloor mapping, and the ocean energy industry.
    Some complex tasks need to be performed by a group of cooperating AUVs.
    Consequently, there is a need for control algorithms that can guide a formation of AUVs along a given path while avoiding collisions with each other.

    A comprehensive overview of various formation path-following methods is presented in \cite{das_cooperative_2016}.
    Most of these methods are based on two concepts: coordinated path-following \cite{borhaug_2006_formation,ghabcheloo_2006_coordinated}, and leader-follower \cite{rongxin_2010_leader,soorki_2011_robust}.
    In the \emph{coordinated path-following} approach, each vehicle follows a predefined path separately.
    Formation is then achieved by coordinating the motion of the vehicles along these paths.
    In this approach, the formation-keeping error (\emph{i.e.,} the difference between the actual and desired relative position of the vehicles) may initially grow as the vehicles converge to their predefined paths.
    In the \emph{leader-follower} approach, one leading vehicle follows the given path while the followers adjust their speed and position to obtain the desired formation shape.
    This approach tends to suffer from the lack of formation feedback due to unidirectional communication (\emph{i.e.,} the leader does not adjust its velocity based on the followers).

    The null-space-based behavioral (NSB) algorithm has also been proposed to solve the formation path-following problem \cite{arrichiello_formation_2006,antonelli_experiments_2009,pang_2019_formation,eek_formation_2020}.
    The NSB algorithm is a centralized control method that allows to combine several hierarchic tasks.
    In the NSB framework, it is possible to design the path-following, formation-keeping, and collision avoidance tasks independently.
    By combining these tasks, the vehicles exhibit the desired behavior.

    This paper aims to extend the results of \cite{eek_formation_2020}, where an NSB algorithm is used to guide two surface vessels moving in the horizontal plane.
    Specifically, we propose an algorithm that works with an arbitrary number of AUVs with five degrees of freedom (DOFs) moving in 3D.
    Similarly to \cite{eek_formation_2020}, we solve the path-following task using line-of-sight (LOS) guidance.
    Using the cascaded systems theory results of \cite{pettersen_lyapunov_2017}, we prove that the closed-loop system consisting of a 3D LOS guidance law, combined with surge, pitch, and yaw autopilots based on \cite{moe_LOS_2016}, is uniformly semi-globally exponentially stable (USGES) and uniformly globally asymptotically stable (UGAS).
    The theoretical results are verified through numerical simulations.

    The remainder of the paper is organized as follows.
    Section \ref{sec:model} introduces the model of the AUVs.
    In Section \ref{sec:objectives}, we define the formation path-following problem.
    In Section \ref{sec:control}, we describe the control system.
    The stability of the control system is proven in Section \ref{sec:path_stability}.
    Section \ref{sec:simulations} contains the results of a numerical simulation.
    Finally, Section \ref{sec:conclusion} contains some concluding remarks.

    \section{Model}
    \label{sec:model}
    In this section, we present the model of the AUV.
    We start by introducing the model in a matrix-vector form.
    Then, we write out the ordinary differential equations (ODEs) for the individual state variables.

    \subsection{Vehicle Model in Vector-Matrix Form}
    The pose ($\pose$) and velocities ($\vel$) of an AUV with 5DOFs are defined as
    \begin{align}
        \pose &= \left[x, y, z, \theta, \psi\right]\T, &
        \vel &= \left[u, v, w, q, r\right]\T,
    \end{align}
    where $x,y,z$ are the coordinates of the vehicle in North-East-Down (NED) coordinate frame, and $\theta$ and $\psi$ are the pitch and yaw angles, respectively.
    The velocities $u,v,w$ are the linear surge, sway and heave velocities in a given body-fixed frame, and $q$ and $r$ are the pitch and yaw rate, respectively.
    The roll dynamics are disregarded as the roll motion is assumed to be small and self-stabilizing by the vehicle design.

    Let $\ocean = \left[V_x, V_y, V_z\right]\T$ be the velocities of an unknown, constant and irrotational ocean current, given in the inertial NED frame.
    Let $\transform$ be the transformation matrix from the body-fixed to the inertial frame.
    $\transform$ is given by
    \begin{equation}
        \transform = \begin{bmatrix} \rot & \mat{O}_{3 \times 2} \\ \mat{O}_{2 \times 3} & \trans  \end{bmatrix},
    \end{equation}
    where $\rot$ is the rotation matrix from the body-fixed to the inertial coordinate frames, $\mat{O}_{n \times m}$ is an $n \times m$ matrix of zeros, and $\trans = {\rm diag} (1, 1/\cos\theta)$, which is well-defined if the pitch angle $\abs{\theta} < \pi/2$.
    Note that the mechanical design of torpedo-shaped rudder-controlled AUVs generally does not allow for pitch angles $\abs{\theta} = \pi/2$.

    The velocities of the ocean current expressed in the body-fixed coordinate frame, $\vel_c$, are thus \vspace{-2mm}
    \begin{equation}
        \vel_c = \left[ \left(\rot\T\,\ocean\right)\T, 0, 0 \right]\T. \label{eq:nu_c} \vspace{-2mm}
    \end{equation}
    We will denote the relative velocities of the vehicle as $\vel_r = \vel - \vel_c$.
    We will also denote the relative surge, sway and heave velocities as $u_r$, $v_r$ and $w_r$, respectively.

    Let $\mathbf{f} = \left[T_u, \delta_e, \delta_r\right]$ be the vector of control inputs, where $T_u$ is the surge thrust generated by the propeller, and $\delta_e$ and $\delta_r$ are the deflection angles of the elevator and rudder, respectively.
    Furthermore, let $\mass$ be the mass and inertia matrix, including added mass effects, $\cor$ the Coriolis centripetal matrix, also including added mass effects, and $\drag$ the hydrodynamic damping matrix.
    The dynamics of the vehicle in a matrix-vector form are then \cite{fossen_handbook_2011} \vspace{-2mm}
    \begin{align}
        \dot{\pose} &= \transform \vel, \label{eq:eta_dot}\\
        \mass \dot{\vel}_r + \cor\vel_r + \drag\vel_r + \grav &= \mat{B}\mat{f}, \label{eq:nu_dot} \vspace{-4.5mm}
    \end{align}
    where $\grav$ is the gravity and buoyancy vector, and $\mat{B}$ is the actuator configuration matrix that maps the control inputs to forces and torques.

    \vspace{-2mm}
    \subsection{Vehicle Model in Component Form} \vspace{-1mm}
    First, let us present the necessary assumptions for deriving the ODEs for individual state variables.
    \begin{assumption}
        \label{ass1}
        The vehicle is slender, torpedo-shaped with port-starboard symmetry.
    \end{assumption}
    \begin{assumption}
        \label{ass2}
        The hydrodynamic damping is linear.
    \end{assumption}
    \begin{assumption}
        \label{ass3}
        The vehicle is neutrally buoyant with the center of gravity (CG) and the center of buoyancy (CB) located along the same vertical axis.
    \end{assumption}

    \emph{Remark:} Assumptions \ref{ass1} and \ref{ass3} are valid from the mechanical design of commercial survey AUVs. Assumption \ref{ass2} is valid for low-speed missions. Also for higher-speed missions, this assumption is often made when designing the controller, as the higher-order damping coefficients are poorly known, and the forces are dissipative. Attempting to cancel the higher-order damping can thus introduce destabilizing control efforts.
    
    Under these assumptions, the $\mat{M}$ and $\mat{B}$ matrices have the following form
    \begin{align}
        \scale[0.9]{\mat{M}} & \scale[0.9]{= \begin{bmatrix} m_{11} & 0 & 0 & 0 & 0\\ 0 & m_{22} & 0 & 0 & m_{25}\\ 0 & 0 & m_{33} & m_{34} & 0\\ 0 & 0 & m_{34} & m_{44} & 0\\ 0 & m_{25} & 0 & 0 & m_{55} \end{bmatrix},} &
        \scale[0.9]{\mat{B}} & \scale[0.9]{= \begin{bmatrix}
            b_{11} & 0 & 0 \\
            0 & 0 & b_{23} \\
            0 & b_{32} & 0 \\
            0 & b_{42} & 0 \\
            0 & 0 & b_{53}
        \end{bmatrix}} \label{eq:mass}
    \end{align}
    the corresponding Coriolis matrix is
    \begin{equation}
        \cor = \begin{bmatrix} 0 & 0 & 0 & c_1 & -c_2 \\ 0 & 0 & 0 & 0 & c_3\\ 0 & 0 & 0 & -c_3 & 0\\ -c_1 & 0 & c_3 & 0 & 0\\ c_2 & -c_3 & 0 & 0 & 0 \end{bmatrix}, \label{eq:coriolis}
    \end{equation}
    where $c_1 = m_{34}\,q+m_{33}\,w_{r}$, $c_2 = m_{25}\,r+m_{22}\,v_{r}$, and $c_3 = m_{11}\,u_{r}$. The hydrodynamic damping matrix is
    \begin{equation}
        \drag \approx \mat{D} = \begin{bmatrix} d_{11} & 0 & 0 & 0 & 0\\ 0 & d_{22} & 0 & 0 & d_{25}\\ 0 & 0 & d_{33} & d_{34} & 0\\ 0 & 0 & d_{43} & d_{44} & 0\\ 0 & d_{52} & 0 & 0 & d_{55} \end{bmatrix}, \label{eq:drag}
    \end{equation}
    and the gravity vector has the following form
    \begin{equation}
        \grav = \left[0, 0, 0, m\,g\,z_g\,\sin(\theta)\right]\T,
    \end{equation}
    where $m$ is the weight of the vessel, $g$ is the gravity acceleration, and $z_g$ is the vertical distance between the CG and CB \cite{fossen_handbook_2011}.

    Furthermore, we assume that the actuators produce no sway and heave acceleration.
    In other words, for every $\mathbf{f}$ there exist $f_u, t_q$ and $t_r$ such that \vspace{-2mm}
    \begin{equation}
        \mass^{-1}\,\mat{B}\,\mat{f} = \left[f_u, 0, 0, t_q, t_r\right]\T.
        \label{eq:forces}\vspace{-2mm}
    \end{equation}
    In \cite{borhaug_straight_2007}, it is shown that if a vehicle satisfies Assumptions \ref{ass1}--\ref{ass3}, the origin of the body-fixed coordinate frame can always be chosen such that \eqref{eq:forces} holds.

    Under these assumptions, the model can expressed in the following form \vspace{-2mm}
    \begin{subequations}
        \begin{align}
            \scale[0.93]{\dot{x}} &= \scale[0.93]{u\,\cos\left(\psi \right)\,\cos\left(\theta \right)-v\,\sin\left(\psi \right)+w\,\cos\left(\psi \right)\,\sin\left(\theta \right),} \\
            \scale[0.93]{\dot{y}} &= \scale[0.93]{u\,\cos\left(\theta \right)\,\sin\left(\psi \right)+v\,\cos\left(\psi \right)+w\,\sin\left(\psi \right)\,\sin\left(\theta \right),} \\
            \scale[0.93]{\dot{z}} &= \scale[0.93]{-u\,\sin\left(\theta \right)+w\,\cos\left(\theta \right),} \\
            \scale[0.93]{\dot{\theta}} &= \scale[0.93]{q,} \label{eq:theta_dot} \\
            \scale[0.93]{\dot{\psi}} &= \scale[0.93]{\frac{1}{\cos\left(\theta \right)}\,r,} \label{eq:psi_dot} \\
            \scale[0.93]{\dot{u}} &= \scale[0.93]{f_u + F_u(u, v, w, q, r) + \bs{\phi}_u(u, v, w, q, r, \theta, \psi)\T\,\ocean,} \label{eq:u_dot} \\
            \scale[0.93]{\dot{v}} &= \scale[0.93]{X_v(u, u_c)\,r + Y_v(u, u_c)\,v_r,} \\
            \scale[0.93]{\dot{w}} &= \scale[0.93]{X_w(u, u_c)\,q + Y_w(u, u_c)\,w_r + G(\theta),} \\
            \scale[0.93]{\dot{q}} &= \scale[0.93]{t_q + F_q(u, w, q, \theta) + \bs{\phi}_q(u, w, q, \theta, \psi)\T\,\bs{\vartheta}\left(\ocean\right),} \label{eq:q_dot} \\
            \scale[0.93]{\dot{r}} &= \scale[0.93]{t_r + F_r(u, v, r) + \bs{\phi}_r(u, v, r, \theta, \psi)\T\,\bs{\vartheta}\left(\ocean\right),} \label{eq:r_dot} 
        \end{align} \label{eq:components} \vspace{-5mm}
    \end{subequations}

    \noindent where $\scale[0.93]{\bs{\vartheta}\left(\ocean\right) = \left[V_x, V_y, V_z, V_x^2, V_y^2, V_z^2, V_x\,V_y, V_x\,V_z, V_y\,V_z\right]\T}$, and the expressions for $F_i(\cdot)$, $\bs{\phi}_i(\cdot)$, $i \in \{u,q,r\}$, $X_v(\cdot)$, $Y_v(\cdot)$, $X_w(\cdot)$, $Y_w(\cdot)$, and $G(\cdot)$ are given in Appendix~\ref{app:components}.

    \section{Control Objectives}
    \label{sec:objectives}\vspace{-2mm}
    The goal is to control $n$ AUVs so that they move in a prescribed formation while avoiding collisions, and their barycenter follows a given path.

    The prescribed path in the inertial coordinate frame is given by a smooth function $\mat{p}_p: \mathbb{R} \rightarrow \mathbb{R}^3$.
    We assume that the path function is $\mathcal{C}^2$ and regular, \emph{i.e.,} the function is continuous up to its second derivative and its first derivative with respect to the path parameter satisfies\vspace{-1mm}
    \begin{equation}
        \norm{\frac{\partial \mat{p}_p(\xi)}{\partial \xi}} \neq 0.\vspace{-1mm}
    \end{equation}
    Therefore, for every point $\mat{p}_p(\xi)$ on the path, there exist path-tangential angles, $\theta_p(\xi)$ and $\psi_p(\xi)$, and a corresponding path-tangential coordinate frame $(x^p, y^p, z^p)$ (see Figure~\ref{fig:path}).
    
    The path-following error $\mat{p}_b^p$ is given by the position of the barycenter expressed in the path-tangential coordinate frame\vspace{-1mm}
    \begin{equation}
        \mat{p}_b^p = \mat{R}\left(\theta_p(\xi), \psi_p(\xi)\right)\T \, \big(\mat{p}_b - \mat{p}_p(\xi)\big), \label{eq:barycenter}\vspace{-2mm}
    \end{equation}
    where\vspace{-3mm}
    \begin{align}
        \mat{p}_b &= \frac{1}{n} \sum_{i=1}^n \mat{p}_i, & 
        \mat{p}_i &= \left[x_i, y_i, z_i\right]\T, \vspace{-3mm}
    \end{align}
    where $(x_i, y_i, z_i)$ is the position of the $i^{\rm th}$ vehicle.
    The goal of path following is to control the vehicles so that $\mat{p}_b^p \equiv \mat{0}_3$, where $\mat{0}_3$ is a 3-element vector of zeros.

    \begin{figure}[b]
        \centering
        \vspace*{-8mm}
        \def\svgwidth{.4\textwidth}
        \import{figures}{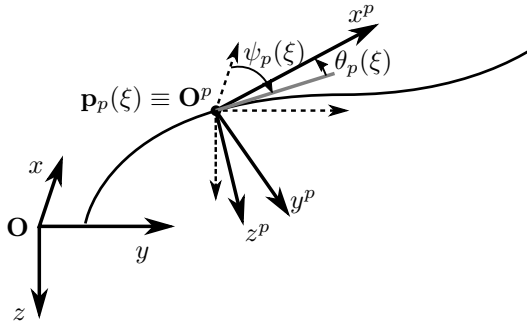}
        \vspace{-3mm}
        \caption{Definition of the path angles and path-tangential coordinate frame. $\mathbf{O}$ denotes the origin of the inertial coordinate frame, $\mathbf{O}^p$ denotes the origin of the path-tangential frame, the grey line represents the projection of the path-tangential vector into the $xy$-plane.}
        \label{fig:path}
    \end{figure}

    \begin{figure}[t]
        \centering
        \vspace{3mm}
        \def\svgwidth{.33\textwidth}
        \import{figures}{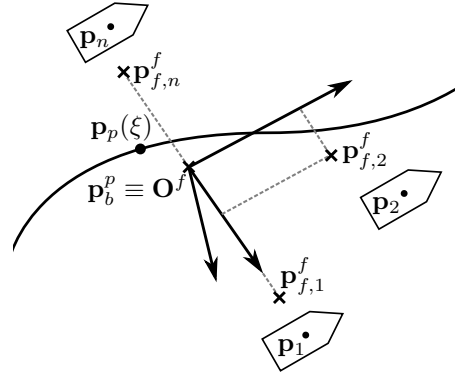}
        \vspace{-1mm}
        \caption{Definition of the formation. $\mathbf{O}^f$ denotes the origin of the formation-centered coordinate frame.}
        \label{fig:formation}
        \vspace*{-7mm}
    \end{figure}

    To define the formation-keeping problem, we first define the formation-centered coordinate frame.
    This coordinate frame is created by translating the path-tangential frame into the barycenter (see Figure~\ref{fig:formation}).
    Let $\mat{p}_{f,1}^f, \ldots, \mat{p}_{f,n}^f$ be the position vectors that represent the desired formation.
    These vectors should be chosen such that their mean is equal to the barycenter.
    Since the barycenter is equivalent to the origin of the formation-centered frame, the vectors must thus satisfy\vspace{-1.5mm}
    \begin{equation}
        \sum_{i=1}^n \mat{p}_{f, i}^f = \mat{0}_3. \vspace{-0.5mm}
    \end{equation}

    \noindent The position of vehicle $i$ in the formation-centered frame is \vspace{-0.5mm}
    \begin{equation}
        \mat{p}_i^f = \mat{R}\T\left(\theta_p(\xi), \psi_p(\xi)\right) \left(\mat{p}_i - \mat{p}_b\right).
    \end{equation}    
    The goal of formation keeping is to have $\mat{p}_i^f \equiv \mat{p}_{f, i}^f$.
    This problem can also be expressed in the inertial coordinate frame as
    \begin{align}
        \mat{p}_i &\equiv \mat{R}\left(\theta_p(\xi), \psi_p(\xi)\right) \mat{p}_{f,i}^f + \mat{p}_b, &
        i &\in \left\{1, \ldots, n \right\}.
    \end{align}

    \section{Control System}
    \label{sec:control}
    To solve the formation path following problem, we propose a method that combines collision avoidance (COLAV), formation keeping, and LOS path following in a hierarchic manner using an NSB algorithm.
    Since the NSB algorithm outputs velocity references, we also need a low-level attitude control system to track these references.

    In this section, we first present the attitude control system.
    Then, in Section~\ref{sec:NSB}, we present the NSB algorithm and the associated COLAV and formation keeping tasks.
    In Section~\ref{sec:LOS}, we present the LOS guidance law for path following.
    Finally, in Section~\ref{sec:path_parameter}, we demonstrate how to use the path variable update law to cancel unwanted terms in the path following error dynamics.

    \subsection{Attitude Control System}
    \label{sec:ACS}
    This system controls the surge velocity, pitch, and yaw via the corresponding accelerations.
    The system is based on the autopilots in \cite{moe_LOS_2016}, but extended to 5DOFs.

    Let $u_d$ be the desired surge velocity and $\dot{u}_d$ its derivative.
    Let $\oceanhat$ be the estimate of the ocean current.
    Furthermore, let us define $\tilde{u} = u - u_d$ and $\oceantilde = \oceanhat - \ocean$.
    The surge controller consists of an output-linearizing sliding-mode P-controller and an ocean current observer \vspace{-1mm}
    \begin{align}
        f_u &= \dot{u}_d - F_u(\cdot) - \bs{\phi}_u(\cdot)\T\,\oceanhat - k_u\,\tilde{u} - k_c\,{\rm sign}\left(\tilde{u}\right), \label{eq:t_u} \\
        \dot{\hat{\mat{V}}}_c &= c_u\,\bs{\phi}_u(\cdot)\,\tilde{u}, \label{eq:V_hat_u} \vspace{-2mm}
    \end{align}
    where $k_u$, $k_c$ and $c_u$ are positive gains.

    Let $\theta_d$ be the desired pitch angle and $\dot{\theta}_d, \ddot{\theta}_d$ its derivatives.
    Let $\hat{\bs{\vartheta}}_q$ be the estimate of $\bs{\vartheta}(\ocean)$.
    Furthermore, let us define $\tilde{\theta} = \theta - \theta_d$, $\tilde{q} = q - \dot{\theta}_d$ and $\tilde{\bs{\vartheta}}_q = \hat{\bs{\vartheta}}_q - \bs{\vartheta}(\ocean)$.
    Inspired by \cite{moe_set-based_2017}, we introduce the following transformation \vspace{-1.5mm}
    \begin{equation}
        s_q = \tilde{q} + \lambda_q\,\tilde{\theta}, \vspace{-2mm}
    \end{equation}
    where $\lambda_q$ is a positive constant.
    The pitch controller consists of an output-linearizing sliding-mode PD-controller and an ocean current observer \vspace{-1mm}
    \begin{align}
        \begin{split}
            t_q &= \ddot{\theta}_d - F_q(\cdot) - \bs{\phi}_q(\cdot)\T\,\hat{\bs{\vartheta}}_q - \lambda_q\,\tilde{q} \\
            & \quad - k_{\theta}\,\tilde{\theta} - k_q\,s_q - k_d\,{\rm sign}(s_q), 
        \end{split} \label{eq:t_q} \\
        \dot{\hat{\bs{\vartheta}}}_q &= c_q\,\bs{\phi}_q(\cdot)\,s_q, \label{eq:V_hat_q}
    \end{align}
    \vspace{-5.5mm}

    \noindent where $k_{\theta}$, $k_q$, $k_d$ and $c_q$ are positive gains.

    Let $\psi_d$ be the desired yaw angle and $\dot{\psi}_d, \ddot{\psi}_d$ its derivatives.
    Let $\hat{\bs{\vartheta}}_r$ be the estimate of $\bs{\vartheta}(\ocean)$.
    Furthermore, let us define $\tilde{\psi} = \psi - \psi_d$ and $\tilde{\bs{\vartheta}}_r = \hat{\bs{\vartheta}}_r - \bs{\vartheta}(\ocean)$.
    Similarly to the pitch controller, we introduce the following transformation
    \begin{equation}
        s_r = \dot{\tilde{\psi}} + \lambda_r\,\tilde{\psi} = \frac{r}{\cos\theta} - \dot{\psi}_d + \lambda_r\,\tilde{\psi},
    \end{equation}
    where $\lambda_r$ is a positive constant.
    The yaw controller is analogous to the pitch controller introduced in the previous section
    \begin{align}        
        \begin{split}
            \scale[0.95]{t_r} &\scale[0.95]{= - F_r(\cdot) - \bs{\phi}_r(\cdot)\T\,\hat{\bs{\vartheta}}_r - r\,\tan(\theta)\dot{\theta}} \\
            & \scale[0.95]{ + \cos(\theta)\left(\ddot{\psi}_d - \lambda_r\,\dot{\tilde{\lambda}} - k_{\psi}\,\tilde{\psi} - k_r\,s_r - k_d\,{\rm sign}(s_r)\right), }
        \end{split} \label{eq:t_r} \\
        \scale[0.95]{\dot{\hat{\bs{\vartheta}}}_r} & \scale[0.95]{= c_r\,\bs{\phi}_r(\cdot)\,s_r,} \label{eq:V_hat_r}
    \end{align}
    where $k_{\psi}$, $k_r$, $k_d$ and $c_r$ are positive gains.

    \subsection{NSB Tasks}
    \label{sec:NSB}
    Let us denote the variables associated with the COLAV, formation keeping, and path following tasks by lower indices $1$, $2$, and $3$, respectively.
    Each task produces a vector of desired velocities, $\mat{v}_{d,i} \in \mathbb{R}^{3n},\, i\in\{1,2,3\}$.

    For the COLAV and formation keeping tasks, the desired velocities are obtained using task varibles, $\bs{\sigma}_1$ and $\bs{\sigma}_2$, and their desired values, $\bs{\sigma}_{d,1}$ and $\bs{\sigma}_{d,2}$.
    
    First, let us consider the COLAV task.
    Let {$d_{\rm COLAV}$} be the \emph{activation distance}, \emph{i.e.,} the distance at which the vehicles need to start performing the evasive maneuvers.
    The task variable is then given by a vector of relative distances between the vehicles smaller than $d_{\rm COLAV}$, \emph{i.e.,}
    \begin{align}
            \bs{\sigma}_1 &= \big[\norm{\mat{p}_i - \mat{p}_j}\big]\T, &
            \begin{split} 
                \forall &i,j\in\{1,\ldots,n\}, j > i, \\
                &\norm{\mat{p}_i - \mat{p}_j} < d_{\rm COLAV}.
            \end{split}
    \end{align}
    The desired values of the task are
    \begin{equation}
        \bs{\sigma}_{1,d} = d_{\rm COLAV} \, \mat{1},
    \end{equation}
    where $\mat{1}$ is a vector of ones.
    Note that this task does not guarantee robust collision avoidance.
    During the transients, the relative distance may become smaller than $d_{\rm COLAV}$.
    Therefore, to ensure collision avoidance, $d_{\rm COLAV}$ shuld be chosen as $d_{\rm min} + d_{\rm sec}$, where $d_{\rm min}$ is the minimum safe distance between the vehicles, and $d_{\rm sec}$ is an additional security distance.

    Now, let us consider the formation keeping task.
    The task variable is defined as
    \begin{align}
        \bs{\sigma}_2 &= \left[\bs{\sigma}_{2,1}\T, \ldots, \bs{\sigma}_{2,n-1}\T\right]\T, \label{eq:sigma_2} &
        \bs{\sigma}_{2,i} &= \mat{p}_i - \mat{p}_b,
    \end{align}
    and its desired values are
    \begin{equation}
        \bs{\sigma}_{d,2} = \begin{bmatrix}
            \mat{R}\left(\theta_p(\xi), \psi_p(\xi)\right)\,\mat{p}_{f,1}^p \\
            \vdots \\
            \mat{R}\left(\theta_p(\xi), \psi_p(\xi)\right)\,\mat{p}_{f,n-1}^p
        \end{bmatrix}. \label{eq:sigma_d_2}
    \end{equation}

    The desired velocities of the COLAV and formation keeping tasks are obtained using the closed-loop inverse kinematics (CLIK) equation \cite{arrichiello_formation_2006}
    \begin{align}
        \mat{v}_{d,i} &= \mat{J}_i^{\dagger}\,\left(\dot{\bs{\sigma}}_{d,i} - \bs{\Lambda}_i\,\tilde{\bs{\sigma}}_i\right), &
        i &\in \{1,2\}, \label{eq:CLIK}
    \end{align}
    where $\tilde{\bs{\sigma}}_i = \bs{\sigma}_i - \bs{\sigma}_{d,i}$ is the error, $\mat{J}^{\dagger}$ is the Moore-Penrose pseudoinverse, $\mat{\Lambda}_i$ is a positive definite gain matrix, and $\mat{J}_i$ is the task Jacobian given by
    \begin{align}
        \mat{J}_i &= \frac{\partial \bs{\sigma}_i}{\partial \mat{p}}, &
        \mat{p} &= \left[\mat{p}_1\T, \ldots, \mat{p}_n\T\right]\T.
    \end{align}

    The desired velocity of the path-following task is obtained using LOS guidance that is explained in the next section.
    These velocities are then combined using the NSB algorithm
    \begin{equation}
        \scale[0.95]{\mat{v}_{\rm NSB} = \mat{v}_{d, 1} + \left(\mat{I} - \mat{J}_1^{\dagger}\mat{J}_1\right)\left(\mat{v}_{d, 2} + \left(\mat{I} - \mat{J}_2^{\dagger}\mat{J}_2\right)\mat{v}_{d, 3}\right),}
    \end{equation}
    if there are active COLAV tasks, and
    \begin{equation}
        \mat{v}_{\rm NSB} = \mat{v}_{d, 2} + \left(\mat{I} - \mat{J}_2^{\dagger}\mat{J}_2\right)\mat{v}_{d, 3}, \label{eq:NSB_nominal}
    \end{equation}
    if there are none ($\mat{I}$ is an identity matrix).
    The NSB velocities must be decomposed into surge, pitch, and yaw references that can be tracked by the attitude control system presented in Section~\ref{sec:ACS}.
    Similarly to \cite{arrichiello_formation_2006}, we propose a method with angle of attack and sideslip compensation
    \begin{align}
        \scale[0.96]{u_{d, i}} & \scale[0.96]{= U_{{\rm NSB}, i}\,\frac{1 + \cos\left(\gamma_{{\rm NSB}, i} - \gamma_i\right)\cos\left(\chi_{{\rm NSB}, i} - \chi_i\right)}{2},} \label{eq:u_d}\\
        \scale[0.96]{\theta_{d, i}} & \scale[0.96]{= \gamma_{{\rm NSB}, i} + \alpha_{d, i}, \quad \alpha_{d, i} = \arctan\left(\frac{w_i}{u_{d, i}}\right),} \label{eq:theta_d} \\
        \scale[0.96]{\psi_{d, i}} & \scale[0.96]{= \chi_{{\rm NSB}, i} - \beta_{d, i}, \quad \beta_{d, i} = \arcsin\left(\frac{v_i}{\sqrt{u_{d, i}^2 + v_i^2 + w_i^2}}\right), } \label{eq:psi_d}
    \end{align}
    where $v_i$ and $w_i$ are the sway and heave velocities, and $\gamma_i$ and $\chi_i$ are the flight-path and course angles of the $i^{\rm th}$ vehicle, respectively, and $U_{{\rm NSB},i}$, $\gamma_{{\rm NSB},i}$ and $\chi_{{\rm NSB},i}$ are given by
    \begin{subequations}
        \begin{align}
            U_{{\rm NSB}, i} &= \norm{\mat{v}_{{\rm NSB}, i}}, \quad
            \mat{v}_{{\rm NSB}, i} = \begin{bmatrix} \dot{x}_{{\rm NSB}, i} \\ \dot{y}_{{\rm NSB}, i} \\ \dot{z}_{{\rm NSB}, i}\end{bmatrix}, \\*
            \gamma_{{\rm NSB}, i} &= - \arcsin\left(\frac{\dot{y}_{{\rm NSB}, i}}{U_{{\rm NSB}, i}}\right), \\
            \chi_{{\rm NSB}, i} &= \mathrm{arctan}_2 \left(\dot{y}_{{\rm NSB}, i}, \dot{x}_{{\rm NSB}, i}\right), 
        \end{align}
    \end{subequations}
    where ${\rm arctan}_2(y, x)$ is the four-quadrant inverse tan.

    \subsection{Line-of-Sight Guidance}
    \label{sec:LOS}
    The desired flight-path angle and course of the path-following task are given by the following LOS law
    \begin{align}
        \scale[1]{\gamma_{\rm LOS}} & \scale[1]{= \theta_p(\xi) + \arctan\left(\frac{z_b^p}{\Delta\left(\mat{p}_b^p\right)}\right),} \label{eq:gamma_LOS} \\
        \scale[1]{\chi_{\rm LOS}} & \scale[1]{= \psi_p(\xi) - \arctan\left(\frac{y_b^p}{\Delta\left(\mat{p}_b^p\right)}\right),}
    \end{align}
    where $\mat{p}_b^p = \left[x_b^p, y_b^p, z_b^p\right]\T$, and $\Delta\left(\mat{p}_b^p\right)$ is the lookahead distance.
    Inspired by \cite{belleter_2019_observer}, we choose the lookahead distance as
    \begin{equation}
        \Delta\left(\mat{p}_b^p\right) = \sqrt{\Delta_0^2 + \left(x_b^p\right)^2 + \left(y_b^p\right)^2 + \left(z_b^p\right)^2}, \label{eq:delta}
    \end{equation}
    where $\Delta_0 > 0$ is a constant.

    The desired velocity of the path-following task is then given by
    \begin{equation}
        \mat{v}_{d, 3} = \mat{1}_n \otimes \mat{v}_{\rm LOS},
    \end{equation}
    where $\cdot \otimes \cdot$ is the Kronecker tensor product, and
    \begin{equation}
        \mat{v}_{\rm LOS} = \begin{bmatrix}
            \cos\left(\chi_{\mathrm{LOS}}\right)\,\cos\left(\gamma_{\mathrm{LOS}}\right)\\ 
            \cos\left(\gamma_{\mathrm{LOS}}\right)\,\sin\left(\chi_{\mathrm{LOS}}\right)\\ 
            -\sin\left(\gamma_{\mathrm{LOS}}\right)
        \end{bmatrix} \, U_{\rm LOS},
    \end{equation}
    where $U_{\rm LOS}$ is the desired path-following speed.

    \subsection{Path Parametrization}
    \label{sec:path_parameter}
    Inspired by \cite{belleter_2019_observer}, we use the update law of the path variable $\xi$ to get desirable behavior of the along-track error ($x_b^p$).
    
    Note that the kinematics of the $i^{\rm th}$ vehicle can be alternatively expressed using the total speed ($U_i$) and the flight-path ($\gamma_i$) and course ($\chi_i$) angles of the vehicle as
    \begin{equation}
        \scale[0.91]{\dot{\mat{p}}_i = \left[\cos\left(\chi_i\right)\,\cos\left(\gamma_i\right) ,\, \cos\left(\gamma_i\right)\,\sin\left(\chi_i\right) ,\, -\sin\left(\gamma_i\right)\right]\T \, U_i.} \label{eq:kinematics}
    \end{equation}
    Now, let us investigate the kinematics of the barycenter.
    Differentiating \eqref{eq:barycenter} with respect to time and substituting \eqref{eq:kinematics} yields the following equations
    \begin{subequations}
        \begin{align}
            \begin{split}
                \dot{x}_b^p &= \frac{1}{n}\sum_{i=1}^n U_i\,\Omega_x\left(\gamma_i, \theta_p, \chi_i, \psi_p\right) \\
                &\quad - \scale{\norm{\frac{\partial \mat{p}_p(\xi)}{\partial \xi}}}\dot{\xi} + \omega_zy_b^p - \omega_yz_b^p,
            \end{split} \label{eq:x_pb} \\
            \dot{y}_b^p &= \frac{1}{n}\sum_{i=1}^n U_i\,\Omega_y\left(\gamma_i, \theta_p, \chi_i, \psi_p\right) + \omega_xz_b^p - \omega_zx_b^p, \label{eq:y_pb} \\
            \dot{z}_b^p &= \frac{1}{n}\sum_{i=1}^n U_i\,\Omega_z\left(\gamma_i, \theta_p, \chi_i, \psi_p\right) + \omega_yx_b^p - \omega_xy_b^p, \label{eq:z_pb}
        \end{align} \label{eq:barycenter_kinematics}
    \end{subequations}
    where
    \begin{subequations}
        \begin{align}
            \scale[0.90]{\Omega_x(\cdot)} & \mathrlap{\scale[0.90]{= \sin\left(\theta_p\right)\sin\left(\gamma_i\right) + \cos\left(\theta_p\right)\cos\left(\gamma_i\right)\cos\left(\psi_p-\chi_i\right),}} \\
            \scale[0.90]{\Omega_y(\cdot)} & \mathrlap{\scale[0.90]{= -\cos\left(\gamma_i\right)\sin\left(\psi_p - \chi_i\right),}} \\
            \scale[0.90]{\Omega_z(\cdot)} & \mathrlap{\scale[0.90]{=-\cos\left(\theta_p\right)\sin\left(\gamma_i\right) + \cos\left(\gamma_i\right)\sin(\theta_p)\cos\left(\psi_p-\chi_i\right)}} \\
            \scale[0.90]{\omega_x} & \scale[0.90]{= -\iota\dot{\xi}\sin(\theta_p),} &
            \scale[0.90]{\omega_y} & \scale[0.90]{= \kappa\dot{\xi},} &
            \scale[0.90]{\omega_z} & \scale[0.90]{= \iota\dot{\xi}\cos(\theta_p),} \\
            \scale[0.90]{\kappa(\xi)} & \scale[0.90]{= \frac{\partial \theta_p(\xi)}{\partial \xi},} &
            \scale[0.90]{\iota(\xi)} & \scale[0.90]{= \frac{\partial \psi_p(\xi)}{\partial \xi}.}
        \end{align}
    \end{subequations}
    To stabilize the along-track error dynamics, we choose the following path variable update 
    \begin{equation}
        \begin{split}
            \dot{\xi} &= \norm{\frac{\partial \mat{p}_p(\xi)}{\partial \xi}}^{-1} \left( \frac{1}{n}\sum_{i=1}^n U_i\,\Omega_x\left(\gamma_i, \theta_p, \chi_i, \psi_p\right)\right. \\
            & \qquad \qquad \qquad \qquad \left. + k_{\xi}\,\frac{x_b^p}{\sqrt{1+\left(x_b^p\right)^2}}\right),
        \end{split} \label{eq:path_update}
    \end{equation}
    where $k_{\xi} > 0$ is a constant.

    \section{Closed-Loop Analysis}
    \label{sec:path_stability}
    In this section, we investigate the closed-loop stability of the path following task.
    We define two error states, $\tilde{\mat{X}}_1$ and $\tilde{\mat{X}}_2$, as
    \begin{align}
        \tilde{\mat{X}}_1 &= \left[x_b^p, y_b^p, z_b^p\right]\T, 
        \tilde{\mat{X}}_2 = \left[\tilde{\mat{X}}_{2,1}\T, \ldots, \tilde{\mat{X}}_{2,n}\T\right]\T, \label{eq:error_states} \\
        \tilde{\mat{X}}_{2,i} &= \left[\tilde{u}_i, s_{q,i}, \tilde{\theta}_i, s_{r,i}, \tilde{\psi}_i\right]\T,
    \end{align}

    Now, we can take the barycenter kinematics from \eqref{eq:barycenter_kinematics} and express it in terms of $\tilde{\mat{X}}_1$ and $\tilde{\mat{X}}_2$ as
    \begin{subequations}
        \begin{align}
            &\dot{x}_b^p = -k_{\xi}\frac{x_b^p}{\sqrt{1+\left(x_b^p\right)^2}} + \omega_zy_b^p - \omega_yz_b^p, \label{eq:x_pb_CL} \\
            &\begin{aligned}
                \dot{y}_b^p &= - \frac{1}{n}\sum_{i=1}^n U_{d,i}{\frac{\cos\left(\gamma_{\rm LOS}\right)y_b^p}{\sqrt{\Delta\left(\mat{p}_b^p\right)^2 + \left(y_b^p\right)^2}}} + \omega_xz_b^p - \mathrlap{\omega_zx_b^p} \\
                & + G_y\big(\tilde{u}_1, \ldots, \tilde{u}_n, \tilde{\psi}_1, \ldots, \tilde{\psi}_n, \gamma_1, \ldots, \gamma_n, \\
                & \quad\quad \mathrlap{u_{d,1}, \ldots, u_{d,n}, v_1, \ldots, v_n, w_1, \ldots, w_n, \mat{p}_b^p, \psi_p\big),}
            \end{aligned} \\
            &\begin{aligned}[t]
                \dot{z}_b^p &= \frac{1}{n} \sum_{i=1}^n U_{d,i}\frac{z_b^p}{\sqrt{\Delta\left(\mat{p}_b^p\right)^2 + \left(z_b^p\right)^2}} + \omega_yx_b^p - \omega_xy_b^p \\
                & \,\mathrlap{+ G_z\big(\tilde{u}_1, \ldots, \tilde{u}_n, \tilde{\theta}_1, \ldots, \tilde{\theta}_n, \gamma_1, \ldots, \gamma_n, \chi_1, \ldots, \chi_n,} \\
                & \mathrlap{\quad\, u_{d,1}, \ldots, u_{d,n}, v_1, \ldots, v_n, w_1, \ldots, w_n, \mat{p}_b^p, \psi_p, \theta_p\big).}
            \end{aligned} \label{eq:z_pb_CL}
        \end{align} \label{eq:nominal}
    \end{subequations}

    \noindent The equations for $G_y(\cdot)$ and $G_z(\cdot)$ are given in Appendix~\ref{app:barycenter}.
    Substituting the attitude control system \eqref{eq:t_u}--\eqref{eq:V_hat_r} into vehicle dynamics \eqref{eq:components} yields the following closed-loop behavior of $\tilde{\mat{X}}_2$
    \begin{subequations}
        \begin{align}
            \scale[0.97]{\dot{\tilde{u}}_i} &= \scale[0.97]{- k_u\,\tilde{u}_i - k_c\,{\rm sign}\left(\tilde{u}_i\right) - \bs{\phi}_u(\cdot)\T\tilde{\mat{V}}_{c,i},} \label{eq:u_tilde} \\
            \scale[0.97]{\dot{s}_{q,i}} &= \scale[0.97]{-k_{\theta}\,\tilde{\theta}_i - k_q\,s_{q,i} - k_d\,{\rm sign}(s_{q,i}) - \bs{\phi}_q(\cdot)\T\,\tilde{\bs{\vartheta}}_{q,i}, }\\
            \scale[0.97]{\dot{\tilde{\theta}}_i} &= \scale[0.97]{s_{q,i} - \lambda_q\,\tilde{\theta}_i,} \\
            \scale[0.97]{\dot{s}_{r,i}} &= \scale[0.97]{-k_{\theta}\,\tilde{\theta}_i - k_r\,s_{r,i} - k_d\,{\rm sign}(s_{r,i}) - \bs{\phi}_r(\cdot)\T\,\tilde{\bs{\vartheta}}_{r,i},} \\
            \scale[0.97]{\dot{\tilde{\psi}}_i} &= \scale[0.97]{s_{r,i} - \lambda_r\,\tilde{\psi}_i,} \label{eq:psi_tilde}
        \end{align} \label{eq:perturbing}
    \end{subequations}
    the ocean current estimate errors
    \begin{subequations}
        \begin{align}
            \dot{\tilde{\mat{V}}}_{c,i} & = c_u\,\bs{\phi}_u(\cdot)\,\tilde{u}_i, \label{eq:V_c_tilde} \\
            \dot{\tilde{\bs{\vartheta}}}_{q,i} &= c_q\,\bs{\phi}_q(\cdot)\,s_{q,i}, \\
            \dot{\tilde{\bs{\vartheta}}}_{r,i} &= c_r\,\bs{\phi}_r(\cdot)\,s_{r,i}, \label{eq:theta_r_tilde}
        \end{align} \label{eq:estimates}
    \end{subequations}
    and the underactuated sway and heave dynamics
    \begin{align}
        \dot{v}_i &= X_v(u_i, u_c)\,r_i + Y_v(u_i, u_c)\,(v_i - v_c), \label{eq:v_dot} \\
        \dot{w}_i &= X_w(u_i, u_c)\,q_i + Y_w(u_i, u_c)\,(w_i - w_c) + G(\theta_i). \label{eq:w_dot}
    \end{align}

    To prove the stability of the closed-loop system, we need the results of the three following lemmas.
    The lemmas follow the same structure as the 2D case for two ASVs in \cite{eek_formation_2020}, and are extended to handle an arbitrary number of AUVs moving in 3D.
    \begin{lemma}
        \label{lemma_1}
        The trajectories of the closed-loop system \eqref{eq:nominal}--\eqref{eq:w_dot} are forward complete.
    \end{lemma}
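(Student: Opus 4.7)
The strategy is to bound the right-hand side of the full closed-loop vector field by an affine function of the state norm, so that a standard comparison (Gronwall) argument precludes finite escape time. I would decompose the analysis into the three coupled subsystems: the attitude-plus-adaptation block \eqref{eq:perturbing}--\eqref{eq:estimates}, the underactuated sway/heave block \eqref{eq:v_dot}--\eqref{eq:w_dot}, and the barycenter path-error block \eqref{eq:nominal}.

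First, I would establish a priori boundedness of the attitude errors $\tilde{\mat{X}}_2$ and of the ocean-current estimation errors using a composite quadratic Lyapunov-like function
\[
V_2 = \tfrac{1}{2}\sum_{i=1}^{n}\Big(\tilde{u}_i^2 + \tfrac{1}{c_u}\norm{\tilde{\mat{V}}_{c,i}}^2 + s_{q,i}^2 + k_{\theta}\tilde{\theta}_i^2 + \tfrac{1}{c_q}\norm{\tilde{\bs{\vartheta}}_{q,i}}^2 + s_{r,i}^2 + k_{\psi}\tilde{\psi}_i^2 + \tfrac{1}{c_r}\norm{\tilde{\bs{\vartheta}}_{r,i}}^2\Big).
\]
Thanks to the matched structure of the adaptation laws \eqref{eq:V_hat_u}, \eqref{eq:V_hat_q} and \eqref{eq:V_hat_r}, the $\bs{\phi}_{(\cdot)}^{\rm T}\tilde{(\cdot)}$ cross terms cancel in $\dot{V}_2$, and the sign terms in \eqref{eq:u_tilde}--\eqref{eq:psi_tilde} contribute non-positively, so $\dot{V}_2\le 0$. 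Consequently, the attitude tracking errors and the estimation errors are uniformly bounded on the maximal interval of existence.

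Second, I would propagate these bounds through the remaining dynamics. The sway/heave equations \eqref{eq:v_dot}--\eqref{eq:w_dot} are linear in $(v_i,w_i)$; their coefficients depend on $u_i=u_{d,i}+\tilde{u}_i$ and on the body-frame current (bounded because $\ocean$ is constant and $\rot$ is orthogonal), while $u_{d,i}$ from \eqref{eq:u_d} is dominated by the NSB speed $U_{\mathrm{NSB},i}$, which is affine in the formation task variables $\bs{\sigma}_2$ and hence in the vehicle positions. In the path-error equations \eqref{eq:x_pb_CL}--\eqref{eq:z_pb_CL}, every trigonometric term and every LOS fraction involving $\Delta(\mat{p}_b^p)$ is bounded by one, and $\omega_x,\omega_y,\omega_z$ are proportional to $\dot{\xi}$, which by \eqref{eq:path_update} inherits the same affine-in-state bound as the $U_i$'s. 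Collecting these estimates yields $\|\dot{\mat{X}}\|\le c_1+c_2\|\mat{X}\|$ for the composite state $\mat{X}$, so Gronwall's inequality rules out finite-time blow-up.

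The main obstacle will be the first step: verifying rigorously the cancellation between the current-dependent terms in the attitude feedback and the adaptation laws, and handling the discontinuous sign terms (either via Filippov solutions or by observing that they only reinforce the sign of $\dot{V}_2$ wherever they are defined). A secondary technicality is tracking how the NSB composition and the pseudoinverses $\mat{J}_1^{\dagger},\mat{J}_2^{\dagger}$ propagate the bounds -- one must check that neither they nor the derivatives $\dot{\bs{\sigma}}_{d,i}$ introduce superlinear growth in the state.
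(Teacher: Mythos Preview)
Your overall strategy—bound the attitude/estimation block first, then propagate to the rest via an affine bound and Gronwall—matches the paper's proof in spirit. The paper likewise relies on the UGES of \eqref{eq:perturbing} and boundedness of \eqref{eq:estimates} (citing \cite{moe_LOS_2016} rather than redoing your $V_2$ computation), and then treats each remaining subsystem with its own quadratic function and a comparison-lemma bound of the form $\dot V\le \alpha V+\beta$.

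There is, however, a real gap in your second step. You assert that \eqref{eq:v_dot}--\eqref{eq:w_dot} are linear in $(v_i,w_i)$ with coefficients depending on $u_i$ and the current, but you never say what controls the angular rates $q_i,r_i$ that multiply $X_w,X_v$. These rates are \emph{not} exogenous bounded signals: in closed loop one has $r_i=\tilde r_i+r_{d,i}$, and while $\tilde r_i$ is bounded by your first step, the reference $r_{d,i}=\dot\psi_{d,i}\cos\theta_{d,i}$ inherits a nontrivial dependence on $v_i$ (and on $\dot v_i$) through the sideslip compensation $\beta_{d,i}$ in \eqref{eq:psi_d}; the analogous feedback holds for $q_{d,i}$ and $w_i$ via the angle-of-attack term in \eqref{eq:theta_d}. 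The paper confronts this explicitly: it differentiates the guidance law to obtain closed-form expressions for $q_{d,i}$ and $r_{d,i}$, and from those extracts the affine estimates $|r_{d,i}|\le a_r(\beta_{v,0})|v_i|+b_r(\beta_{v,0})$ and $|q_{d,i}|\le a_q(\beta_{w,0})|w_i|+b_q(\beta_{w,0})$. Only after this step does the right-hand side of $\dot V_v,\dot V_w$ become at most quadratic in $v_i,w_i$, so that Young's inequality plus the comparison lemma yields $\dot V_v\le\alpha_v V_v+\beta_v$ (and similarly for heave). Without this step your global bound $\|\dot{\mat X}\|\le c_1+c_2\|\mat X\|$ is unsubstantiated, because you have not shown that $r_i,q_i$ grow at most linearly in the composite state.

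A minor redirection: your concern about the NSB pseudoinverses and $\dot{\bs\sigma}_{d,i}$ is largely beside the point for this lemma. The closed-loop system \eqref{eq:nominal}--\eqref{eq:w_dot} analyzed here is the path-following loop; the paper treats $u_{d,i}$ as bounded (tied to the constant $U_{\rm LOS}$), so the formation CLIK velocities and $\mat J_2^{\dagger}$ do not enter the forward-completeness argument. The place to invest your effort is the guidance-induced coupling between $(v_i,w_i)$ and $(r_{d,i},q_{d,i})$ described above.
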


    \begin{proof}
        The proof is given in Appendix~\ref{app:lemma_1}.
    \end{proof}

    \begin{lemma}
        \label{lemma_2}
        The underactuated sway and heave dynamics are bounded near the manifold $\left[\tilde{\mat{X}}_1\T, \tilde{\mat{X}}_2\T\right] = \mat{0}\T$ if $Y_v(u, u_c) < 0$, $Y_w(u, u_c) < 0$ and the curvature of the path satisfies
        \begin{align}
            \abs{\kappa(\xi)} &< \frac{n}{2}\abs{\frac{Y_w(u, u_c)}{X_w(u, u_c)}}, &
            \abs{\iota(\xi)} &< \frac{n}{2}\abs{\frac{Y_v(u, u_c)}{X_v(u, u_c)}}, \label{eq:curvature_limit}
        \end{align}
        for all $u > 0$ and $u_c \in [-\norm{\mat{V}_c}, \norm{\mat{V}_c}]$.
    \end{lemma}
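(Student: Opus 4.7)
The plan is to restrict the dynamics \eqref{eq:v_dot}--\eqref{eq:w_dot} to the manifold $[\tilde{\mat{X}}_1\T, \tilde{\mat{X}}_2\T] = \mat{0}\T$, derive effective expressions for $q_i$ and $r_i$ in terms of path quantities, and then apply a Lyapunov argument to conclude boundedness.

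First, I would evaluate the system on the manifold. With $\tilde{u}_i = s_{q,i} = \tilde\theta_i = s_{r,i} = \tilde\psi_i = 0$ we have $u_i = u_{d,i}$, $\theta_i = \theta_{d,i}$, $\psi_i = \psi_{d,i}$, and from the sliding variable definitions $q_i = \dot\theta_{d,i}$ and $r_i = \cos(\theta_i)\dot\psi_{d,i}$. With $\mat{p}_b^p = \mat{0}$ the LOS law reduces to $\gamma_{\rm LOS} = \theta_p(\xi)$ and $\chi_{\rm LOS} = \psi_p(\xi)$, so $\mat{v}_{\rm LOS}$ is exactly the path tangent with magnitude $U_{\rm LOS}$. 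The NSB projection \eqref{eq:NSB_nominal} and the decomposition \eqref{eq:u_d}--\eqref{eq:psi_d} then yield expressions for $u_{d,i}, \theta_{d,i}, \psi_{d,i}$ that depend only on $\mat{p}_i^f$, the path angles, and the instantaneous $v_i, w_i$ through the sideslip/AoA terms.

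Next, I would differentiate $\theta_{d,i}$ and $\psi_{d,i}$ along trajectories on the manifold, using \eqref{eq:path_update} to substitute for $\dot\xi$. The chain rule yields terms of the form $\kappa(\xi)\dot\xi$ and $\iota(\xi)\dot\xi$ (from the path angle rates entering $\gamma_{\rm NSB}, \chi_{\rm NSB}$), terms depending on the shape derivatives of the formation, and terms proportional to $\dot v_i/u_{d,i}$ and $\dot w_i/u_{d,i}$ that come from differentiating $\alpha_{d,i}, \beta_{d,i}$. Substituting these into \eqref{eq:v_dot}--\eqref{eq:w_dot} produces an implicit system in $(\dot v_i, \dot w_i)$ that can be rearranged so that the leading coefficient on the left-hand side is $1 - \tfrac{X_v(\cdot)}{n\, u_{d,i}}\,(\text{sideslip Jacobian term})$; near the manifold and for $u_{d,i}$ bounded away from zero this coefficient is bounded away from zero, so the resulting closed-form $\dot v_i, \dot w_i$ have a linear negative-damping part with coefficients $Y_v(\cdot), Y_w(\cdot)$, plus a forcing term proportional to $X_v(\cdot)\iota(\xi)\dot\xi$ and $X_w(\cdot)\kappa(\xi)\dot\xi$ respectively, plus bounded residuals.

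Finally, I would take the Lyapunov-type function $V = \tfrac12\sum_{i=1}^{n}(v_i^2 + w_i^2)$ and compute $\dot V$ along the restricted dynamics. Using $Y_v, Y_w < 0$ to extract a negative definite quadratic in $(v_i,w_i)$, and bounding the forcing by $|X_v\iota\dot\xi|$ and $|X_w\kappa\dot\xi|$, the curvature conditions \eqref{eq:curvature_limit} are precisely what makes the cross-coupling from $r_i$ and $q_i$ into $\dot v_i, \dot w_i$ strictly dominated by the damping, so that $\dot V \leq -cV + C$ for positive constants $c,C$ in a neighborhood of the manifold. This yields uniform boundedness of $v_i, w_i$ and hence of the underactuated dynamics near the manifold.

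The main obstacle will be the bookkeeping in the second step: $\theta_{d,i}$ and $\psi_{d,i}$ depend on $v_i, w_i$ through the sideslip/AoA compensation, so their time derivatives re-enter $r_i$ and $q_i$, producing an implicit algebraic loop that must be resolved carefully (requiring $u_{d,i}$ to remain bounded away from zero, which follows from the design of \eqref{eq:u_d} together with $U_{\rm LOS} > 0$) before the Lyapunov bound can be extracted.
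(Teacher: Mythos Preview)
Your approach is essentially the same as the paper's: restrict to the manifold, compute the effective $q_{d,i},r_{d,i}$ by differentiating $\theta_{d,i},\psi_{d,i}$, and run a quadratic Lyapunov argument showing that the damping $Y_v,Y_w$ dominates the destabilizing contribution coming through $\kappa,\iota$. The paper works per vehicle with $V_v(v_i)=\tfrac12 v_i^2$, $V_w(w_i)=\tfrac12 w_i^2$ rather than your summed $V$, but the logic is the same.

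There is, however, a gap in your write-up at the step that actually produces the specific bound \eqref{eq:curvature_limit}. You describe the contribution from the path as ``a forcing term proportional to $X_v\iota\dot\xi$ \ldots plus bounded residuals'' and then conclude $\dot V\le -cV+C$. But $\dot\xi$ is \emph{not} bounded independently of the underactuated states: on the manifold, \eqref{eq:path_update} gives $\dot\xi\propto\frac{1}{n}\sum_j U_j\,\Omega_x$, and $U_j=\sqrt{u_j^2+v_j^2+w_j^2}$ grows linearly with $|v_j|,|w_j|$. So the ``forcing'' is in fact a term that is linear in $v_i$ (and in the other $v_j$), and competes directly with the damping $Y_v v_i^2$. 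The paper makes this explicit: from the expression for $r_{d,i}$ one isolates the part that grows with $v_i$ and bounds it by $\frac{2}{n}|\iota(\xi)|\,v_i^2$ plus terms linear in $|v_i|$ involving the other vehicles, where the factor $\frac{2}{n}$ comes from the $\frac{1}{n}$ averaging in $\dot\xi$ together with the bound $\bigl|1+\frac{\Delta x_b^p}{\Delta^2+(x_b^p)^2}\bigr|\le 2$. The curvature condition $|\iota|<\frac{n}{2}|Y_v/X_v|$ is then exactly the inequality $X_v\frac{2}{n}|\iota|+Y_v<0$ that makes the coefficient of $v_i^2$ negative, after which the quadratic term dominates the remaining linear terms for large $|v_i|$. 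Your plan needs this step spelled out; as written, calling the $X_v\iota\dot\xi$ term ``forcing'' and the rest ``bounded residuals'' hides the mechanism and leaves the factor $\tfrac{n}{2}$ unexplained.
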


    \begin{proof}
        The proof is given in Appendix~\ref{app:lemma_2}.
    \end{proof}

    \begin{lemma}
        \label{lemma_3}
        The underactuated sway and heave dynamics are bounded near the manifold $\tilde{\mat{X}}_2 = \mat{0}$, independently of $\tilde{\mat{X}}_1$ if the assumptions in Lemma~\ref{lemma_2} are satisfied and the constant term $\Delta_0$ in the lookahead distance \eqref{eq:delta} is chosen so that
        \begin{equation}
            \begin{split}
                \Delta_0 > \max&\left\{\frac{3}{n\abs{\frac{Y_v\left(u, u_c\right)}{X_v\left(u, u_c\right)}} - 2\abs{\iota(\xi)}}, \right. \\
                & \quad \left. \frac{3}{n\abs{\frac{Y_w\left(u, u_c\right)}{X_w\left(u, u_c\right)}} - 2\abs{\kappa(\xi)}}\right\},
            \end{split} \label{eq:lookahead_limit}
        \end{equation}
        for all $u > 0$ and $u_c \in [-\norm{\mat{V}_c}, \norm{\mat{V}_c}]$.
    \end{lemma}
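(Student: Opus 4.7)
The plan is to view the underactuated dynamics \eqref{eq:v_dot}--\eqref{eq:w_dot} as a forced linear-in-$(v_i,w_i)$ system on the invariant manifold $\tilde{\mat{X}}_2 = \mat{0}$, establish an ultimate bound for $(v_i,w_i)$ that is independent of $\tilde{\mat{X}}_1$, and then extend the conclusion to a neighborhood of the manifold by continuity of the right-hand sides. On $\tilde{\mat{X}}_2 = \mat{0}$ we have $u_i = u_{d,i}$, $\theta_i = \theta_{d,i}$, $\psi_i = \psi_{d,i}$, $q_i = \dot{\theta}_{d,i}$, and $r_i = \cos(\theta_{d,i})\,\dot{\psi}_{d,i}$, so the only remaining coupling of \eqref{eq:v_dot}--\eqref{eq:w_dot} with $\tilde{\mat{X}}_1$ enters through $\dot{\chi}_{\mathrm{LOS}}$ and $\dot{\gamma}_{\mathrm{LOS}}$ in $r_i, q_i$; the other exogenous inputs are the bounded current components $v_c, w_c$ and the restoring term $G(\theta_{d,i})$.

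The decisive step is to bound $\dot{\chi}_{\mathrm{LOS}}$ and $\dot{\gamma}_{\mathrm{LOS}}$ uniformly in $\tilde{\mat{X}}_1$. Differentiating \eqref{eq:gamma_LOS} and its $\chi$-counterpart produces path-curvature contributions $\kappa(\xi)\dot{\xi}$ and $\iota(\xi)\dot{\xi}$ plus LOS-correction contributions proportional to $\tfrac{d}{dt}\arctan\bigl(y_b^p/\Delta(\mat{p}_b^p)\bigr)$ and its $z$-analogue. The algebraic identity $\Delta(\mat{p}_b^p)^2 - (y_b^p)^2 \ge \Delta_0^2$, together with $|\dot{\Delta}| \le \norm{\dot{\mat{p}}_b^p}$, shows that both arctan derivatives are bounded by a constant times $\norm{\dot{\mat{p}}_b^p}/\Delta_0$; substituting \eqref{eq:barycenter_kinematics} and using \eqref{eq:path_update} to replace $\dot{\xi}$ then yields estimates of the form $|r_i| \le |\iota(\xi)|\,U_{\mathrm{LOS}} + C(1+\max_j|v_j|)/\Delta_0$ and $|q_i| \le |\kappa(\xi)|\,U_{\mathrm{LOS}} + C(1+\max_j|w_j|)/\Delta_0$ with $C$ independent of $\tilde{\mat{X}}_1$.

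Taking $V = \tfrac{1}{2}\sum_i (v_i^2 + w_i^2)$ and differentiating along trajectories, the damping supplied by $Y_v, Y_w < 0$ competes with the destabilizing quadratic contributions $X_v r_i v_i$ and $X_w q_i w_i$. Applying Young's inequality to absorb the $(v_j, w_j)$-dependencies hidden in $r_i, q_i$, and summing over $i$, the net coefficient of $\sum_i v_i^2$ becomes $Y_v(\cdot) + (2/n)|X_v(\cdot)|\bigl(|\iota(\xi)| + 3/\Delta_0\bigr)$ up to residual terms absorbed into $\beta$, with the symmetric expression for $\sum_i w_i^2$. The combined hypotheses \eqref{eq:curvature_limit} and \eqref{eq:lookahead_limit} are precisely what is needed to keep both coefficients strictly negative, giving $\dot V \le -\alpha V + \beta$ and hence the desired ultimate bound by a comparison argument. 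The main obstacle is the book-keeping in the $\dot{\chi}_{\mathrm{LOS}}, \dot{\gamma}_{\mathrm{LOS}}$ estimate: one must carefully track how $\dot{\mat{p}}_b^p$ re-enters through $\dot{\Delta}$ and how $\dot{\xi}$ from \eqref{eq:path_update} couples all vehicles' speeds, so that the numerical constant multiplying $1/\Delta_0$ matches the $3$ appearing in \eqref{eq:lookahead_limit}.
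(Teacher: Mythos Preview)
Your approach is essentially the paper's: restrict to $\tilde{\mat{X}}_2 = \mat{0}$, bound the angular rates entering \eqref{eq:v_dot}--\eqref{eq:w_dot} by a path-curvature part plus a $1/\Delta(\mat{p}_b^p)$ part coming from the LOS corrections, and verify that the quadratic coefficient in the Lyapunov derivative is negative exactly when \eqref{eq:lookahead_limit} holds. The paper carries this out with the individual functions $V_v(v_i)=\tfrac12 v_i^2$, $V_w(w_i)=\tfrac12 w_i^2$ and bounds $r_{d,i}v_i$, $q_{d,i}w_i$ directly from the explicit expressions \eqref{eq:r_d}, \eqref{eq:q_d}, treating the $|v_j|,\,j\neq i$ cross terms as linear-in-$v_i$ inputs; your summed $V$ with Young's inequality handles that coupling a little more cleanly.

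One refinement to your intermediate estimate: writing $|r_i| \le |\iota(\xi)|\,U_{\mathrm{LOS}} + C(1+\max_j|v_j|)/\Delta_0$ is not quite the right split, because $\dot\psi_p = \iota(\xi)\dot\xi$ and $\dot\xi$ from \eqref{eq:path_update} carries $\tfrac{1}{n}\sum_j U_j\,\Omega_x$, hence a $v_j$-dependence, not just the constant $U_{\mathrm{LOS}}$. Tracking this as the paper does, together with the bound $\bigl(1+\Delta\,x_b^p/(\Delta^2+(x_b^p)^2)\bigr)\le 2$, is what produces the $\tfrac{2}{n}|\iota|$ factor on $v_i^2$; the remaining three LOS-correction fractions in \eqref{eq:r_d} are each bounded by $\tfrac{1}{n\Delta}\sum_j U_j$, which gives $\tfrac{3}{n\Delta}$ rather than the $\tfrac{2}{n}\cdot\tfrac{3}{\Delta_0}$ in your stated coefficient and recovers the constant $3$ you flagged.
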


    \begin{proof}
        The proof is given in Appendix~\ref{app:lemma_3}.
    \end{proof}

    \begin{theorem}
        The origin $\left[\tilde{\mat{X}}_1\T, \tilde{\mat{X}}_2\T\right] = \mat{0}\T$ of the system described by \eqref{eq:nominal},\eqref{eq:perturbing} is a USGES equilibrium point if the conditions of Lemmas~\ref{lemma_2} and \ref{lemma_3} hold and the maximum pitch angle of the path satisfies
        \begin{equation}
            \theta_{p, {\rm max}} = \max_{\xi \in \mathbb{R}} \abs{\theta_p(\xi)} < \frac{\pi}{4}.
            \label{eq:theta_max}
        \end{equation}
        Moreover, the ocean current estimate errors \eqref{eq:estimates} and the underactuated sway and heave dynamics \eqref{eq:v_dot}, \eqref{eq:w_dot} are bounded.
    \end{theorem}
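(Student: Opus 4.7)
The plan is to apply the cascaded-system result of \cite{pettersen_lyapunov_2017}, treating \eqref{eq:nominal} as the nominal subsystem in $\tilde{\mat{X}}_1$ and \eqref{eq:perturbing} together with \eqref{eq:estimates} as the perturbing subsystem driving it. By construction the interconnection terms $G_y, G_z$ vanish whenever $\tilde{\mat{X}}_2 = \mat{0}$, so the required cascade structure is immediate. Three ingredients then remain to be established: USGES of the perturbing subsystem, USGES of the nominal subsystem with $\tilde{\mat{X}}_2 \equiv \mat{0}$, and a linear growth bound on the interconnection in $\tilde{\mat{X}}_1$.

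For the perturbing subsystem, the attitude errors \eqref{eq:perturbing} are themselves cascaded with the estimator errors \eqref{eq:estimates}. I would first use a composite Lyapunov function summing quadratic terms in $\tilde u_i, s_{q,i}, \tilde\theta_i, s_{r,i}, \tilde\psi_i$ with the estimator errors $\tilde{\mat{V}}_{c,i}, \tilde{\bs\vartheta}_{q,i}, \tilde{\bs\vartheta}_{r,i}$ weighted by $1/c_u, 1/c_q, 1/c_r$; the $\bs\phi$-cross-products between the error dynamics and the observer updates then cancel, leaving a non-positive derivative and hence boundedness of the estimator errors in a compact set fixed by the initial condition. On that compact set the signum gains $k_c, k_d$ can be taken large enough to dominate the residual observer-error perturbation, so a second Lyapunov analysis in $\tilde{\mat{X}}_2$ alone produces a negative-definite quadratic derivative and thus USGES of $\tilde{\mat{X}}_2$.

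For the nominal subsystem, with $\tilde{\mat{X}}_2 \equiv \mat{0}$ the equations \eqref{eq:x_pb_CL}--\eqref{eq:z_pb_CL} reduce to a 3D LOS error system in which the skew $\omega$-terms arising from the path update \eqref{eq:path_update} cancel pairwise in the time derivative of $V_1 = \tfrac12 \norm{\tilde{\mat{X}}_1}^2$. The diagonal contributions from \eqref{eq:path_update} and the LOS laws \eqref{eq:gamma_LOS}, \eqref{eq:delta} supply the dissipation, but the trigonometric expansions of $\sin\gamma_{\rm LOS}$ and $\cos\gamma_{\rm LOS}$ through $\theta_p$ produce mixed terms whose sign-definiteness depends on $\theta_p$. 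The assumption $\theta_{p,\max} < \pi/4$, equivalent to $\cos(2\theta_p)>0$, is exactly what is needed for the diagonal dissipation to dominate these mixed terms uniformly in $\xi$, yielding $\dot V_1 \le -\alpha V_1 / \sqrt{1+V_1}$ on any sublevel set and hence USGES of $\tilde{\mat{X}}_1$. I expect this trigonometric sign-estimation to be the main technical obstacle, since the bound must hold uniformly in $\xi$ and under the state-dependent $\dot\xi$ from \eqref{eq:path_update}.

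Finally, inspection of $G_y, G_z$ in Appendix~\ref{app:barycenter} shows that the interconnection is smooth, vanishes at $\tilde{\mat{X}}_2 = \mat{0}$, and has coefficients continuous in $\tilde{\mat{X}}_1$, so it satisfies the linear-growth condition required by \cite{pettersen_lyapunov_2017}. Combining the two USGES results with Lemma~\ref{lemma_1} for forward completeness and the growth bound, the cascade theorem yields USGES of the origin of $[\tilde{\mat{X}}_1\T, \tilde{\mat{X}}_2\T]\T$. The \emph{moreover} statement follows at once: estimator-error boundedness is part of the first step, and boundedness of the underactuated sway and heave dynamics is exactly the conclusion of Lemma~\ref{lemma_3} under the standing conditions.
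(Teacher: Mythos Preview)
Your overall plan is the paper's plan: the same cascade decomposition via \cite{pettersen_lyapunov_2017}, the same quadratic Lyapunov function $V_1=\tfrac12\norm{\tilde{\mat X}_1}^2$ for the nominal subsystem, the same skew-cancellation of the $\omega$-terms, and the same appeal to a linear-growth bound on $G_y,G_z$. The paper, incidentally, does not rederive stability of the attitude errors but simply cites \cite{moe_LOS_2016} for UGES of \eqref{eq:perturbing}; your composite-Lyapunov sketch would also work but is more than is needed here.

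The one place where your understanding is off is the role of the condition $\theta_{p,\max}<\pi/4$. There are no ``mixed terms'' to dominate: once the skew $\omega$-terms cancel, $\dot V_1$ is already the diagonal form $-\tilde{\mat X}_1\T\mat Q\,\tilde{\mat X}_1$ with $\mat Q=\diag(q_1,q_2,q_3)$, where $q_2$ carries the factor $\cos\gamma_{\rm LOS}$ intact. Nothing needs to be expanded through $\theta_p$, and the quantity $\cos(2\theta_p)$ never appears. The bound $\theta_{p,\max}<\pi/4$ is used much more simply (see the Remark preceding the proof): since $\gamma_{\rm LOS}=\theta_p(\xi)+\arctan\bigl(z_b^p/\Delta(\mat p_b^p)\bigr)$ and the lookahead law \eqref{eq:delta} forces $\abs{\arctan(z_b^p/\Delta)}<\pi/4$, one gets $\abs{\gamma_{\rm LOS}}<\pi/2$ and hence $\cos\gamma_{\rm LOS}>0$, which is all that is required for $\mat Q$ to be positive definite. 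The ``main technical obstacle'' you anticipate therefore does not arise; the nominal-system step is shorter than you expect, and the resulting estimate is $\dot V_1\le -q_{\min}\norm{\tilde{\mat X}_1}^2$ on each ball $\mathcal B_r$ with $q_{\min}$ depending on $r$, which feeds directly into \cite[Theorem~5]{pettersen_lyapunov_2017}.
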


    \emph{Remark:} Condition \eqref{eq:theta_max} is needed to ensure that $\abs{\gamma_{\rm LOS}} < \pi/2$.
    Indeed, from \eqref{eq:gamma_LOS}, the largest possible LOS reference angle is
    \begin{equation}
        \begin{split}
            \gamma_{\rm LOS, max} &= \theta_{p, {\rm max}} + \lim_{z_b^p \rightarrow \infty} \arctan\left(\scale[1]{\frac{z_b^p}{\sqrt{\Delta_0^2 + \left(z_b^p\right)^2}}}\right) \\
            &= \theta_{p, {\rm max}} + \frac{\pi}{4}.
        \end{split}
    \end{equation}
    With \eqref{eq:theta_max} satisfied, the cosine of $\gamma_{\rm LOS}$ is always positive. We will use this fact in the proof.
    
    \begin{proof}
    The proof follows along the lines of \cite{eek_formation_2020}, but is extended to an arbitrary number of 5DOF vehicles.
    We will also use the results of \cite{pettersen_lyapunov_2017} to prove that the system is USGES.

    In Lemmas~\ref{lemma_1}--\ref{lemma_3}, we have shown that the closed-loop system is forward complete and the underactuated sway and heave dynamics are bounded near the manifold $\tilde{\mat{X}}_2 = \mat{0}$.
    Since \eqref{eq:perturbing} is UGES \cite{moe_LOS_2016}, we can conclude that there exists a finite time $T > t_0$ such that the solutions of \eqref{eq:perturbing} will be sufficiently close to $\tilde{\mat{X}}_2 = \mat{0}$ to guarantee boundedness of $v_i$ and $w_i$.
    Having established that the underactuated dynamics are bounded, we will now utilize cascaded theory to analyze the cascade \eqref{eq:nominal}, \eqref{eq:perturbing}, where \eqref{eq:perturbing} perturbs the nominal dynamics \eqref{eq:nominal} through the terms $G_y(\cdot)$ and $G_z(\cdot)$.

    Now, consider the nominal dynamics of $\tilde{\mat{X}}_1$ (\emph{i.e.,} \eqref{eq:nominal} without the perturbing terms $G_y$ and $G_z$), and a Lyapunov function candidate
    \begin{equation}
        V(\tilde{\mat{X}}_1) = \frac{1}{2} \tilde{\mat{X}}_1\T\,\tilde{\mat{X}}_1 = \frac{1}{2} \left((x_b^p)^2 + (y_b^p)^2 + (z_b^p)^2\right), \label{eq:LCF}
    \end{equation}
    whose derivative along the trajectories of \eqref{eq:nominal} is
    \begin{subequations}
        \begin{align}
            \dot{V}(\tilde{\mat{X}}_1) &= - \tilde{\mat{X}}_1\T\,\mat{Q}\,\tilde{\mat{X}}_1, &
            \mat{Q} &= {\rm diag}(q_1, q_2, q_3), \\
            q_1 &= \scale[1]{\frac{k_{\xi}}{\sqrt{1+\left(x_b^p\right)^2}}}, &
            q_2 &= \scale[1]{{\frac{\frac{1}{n}\sum_{i=1}^n U_{d,i}\cos\left(\gamma_{\rm LOS}\right)}{\sqrt{\Delta\left(\mat{p}_b^p\right)^2 + \left(y_b^p\right)^2}}}}, \\
            &&q_3 &= \scale[1]{\frac{\frac{1}{n} \sum_{i=1}^n U_{d,i}}{\sqrt{\Delta\left(\mat{p}_b^p\right)^2 + \left(z_b^p\right)^2}}}. 
        \end{align}
    \end{subequations}
    Note that $\mat{Q}$ is positive definite, and the nominal system is thus UGAS.
    Furthermore, note that the following inequality
    \begin{subequations}
        \begin{align}
            \dot{V}(\tilde{\mat{X}}_1) &\leq - q_{\rm min} \norm{\tilde{\mat{X}}_1}^2, \\
            q_{\rm min} &= \scale[1]{\min \left\{ \frac{k_{\xi}}{\sqrt{1+r^2}}, \frac{\frac{1}{n} \sum_{i=1}^n U_{d,i}\cos\left(\gamma_{\rm LOS}\right)}{\sqrt{\Delta_0^2 + 4r^2}} \right\}},
        \end{align}
    \end{subequations}
    holds $\forall \tilde{\mat{X}}_1 \in \mathcal{B}_r$.
    Thus, the conditions of \cite[Theorem 5]{pettersen_lyapunov_2017} are fulfilled with $k_1 = k_2 = 1/2$, $a = 2$, and $k_3 = q_{\rm min}$, and the nominal system is USGES.

    As discussed in the proof of Lemma~\ref{lemma_1}, the perturbing system \eqref{eq:perturbing} is UGES, implying both UGAS and USGES.
    Furthermore, it is straightforward to show that the following holds for the Lyapunov function \eqref{eq:LCF}
    \begin{align}
        \left\| \frac{\partial V}{\partial \tilde{\mat{X}}_1} \right\| \, \left\|\tilde{\mat{X}}_1\right\| &= \left\|\tilde{\mat{X}}_1\right\|^2 = 2\,V\left(\tilde{\mat{X}}_1\right), \quad \forall \tilde{\mat{X}}_1, \\
        \left\| \frac{\partial V}{\partial \tilde{\mat{X}}_1} \right\| &= \left\|\tilde{\mat{X}}_1\right\| \leq \mu, \quad \quad \forall \left\|\tilde{\mat{X}}_1\right\| \leq \mu.
    \end{align}
    Therefore, \cite[Assumption 1]{pettersen_lyapunov_2017} is satisfied with $c_1 = 2$ and $c_2 = \mu$ for any $\mu > 0$.

    Finally, \cite[Assumption 2]{pettersen_lyapunov_2017} must be investigated.
    From \eqref{eq:G_y}, \eqref{eq:G_z}, it can be shown that for both perturbing terms there exist positive functions $\zeta_{y,1}(\cdot)$, $\zeta_{y,2}(\cdot)$, $\zeta_{z,1}(\cdot)$, $\zeta_{z,2}(\cdot)$, such that
    \begin{align}
        \left|G_y(\cdot)\right| &\leq \zeta_{y,1}\left(\norm{\tilde{\mat{X}}_2}\right) + \zeta_{y,2}\left(\norm{\tilde{\mat{X}}_2}\right)\norm{\tilde{\mat{X}}_1}, \\
        \left|G_z(\cdot)\right| &\leq \zeta_{z,1}\left(\norm{\tilde{\mat{X}}_2}\right) + \zeta_{z,2}\left(\norm{\tilde{\mat{X}}_2}\right)\norm{\tilde{\mat{X}}_1}.
    \end{align}
    Therefore, all conditions of \cite[Proposition 9]{pettersen_lyapunov_2017} are satisfied, and the closed-loop system is USGES.    
    \end{proof}

    \section{Simulation Results}
    \label{sec:simulations}
    \begin{figure*}[t]
        \centering
%
%
\definecolor{mycolor1}{rgb}{0.00000,0.44700,0.74100}%
\definecolor{mycolor2}{rgb}{0.85000,0.32500,0.09800}%
\definecolor{mycolor3}{rgb}{0.92900,0.69400,0.12500}%
\begin{tikzpicture}

\begin{axis}[%
width=65mm,
height=25.399mm,
at={(0mm,39mm)},
scale only axis,
xmin=0,
xmax=150,
xtick={0,50,100,150},
ymin=-5.5,
ymax=6.9,
ylabel style={font=\color{white!15!black},yshift=-2mm},
ylabel={Error [m]},
xlabel style={font=\color{white!15!black},yshift=2mm},
xlabel={$t$ [s]},
axis background/.style={fill=white},
axis x line*=bottom,
axis y line*=left,
title style={font=\bfseries,yshift=-0.5mm},
title={Path-following error},
legend columns=3
]
\addplot [color=mycolor1, line width=1.0pt]
  table[]{simout-1.tsv};
  \addlegendentry{$x_b^p$}

\addplot [color=mycolor2, line width=1.0pt]
  table[]{simout-2.tsv};
  \addlegendentry{$y_b^p$}

\addplot [color=mycolor3, line width=1.0pt]
  table[]{simout-3.tsv};
  \addlegendentry{$z_b^p$}  
\end{axis}

\begin{axis}[%
  width=65mm,
  height=25.699mm,
  at={(0mm,0mm)},
  scale only axis,
  xmin=0,
  xmax=150,
  xlabel style={font=\color{white!15!black}, yshift=1mm},
  xlabel={$t$ [s]},
  ymin=0,
  ymax=20.62,
  ylabel style={font=\color{white!15!black}},
  ylabel={Distance [m]},
  axis background/.style={fill=white},
  title style={font=\bfseries, yshift=-1mm},
  title={Distance between the vehicles},
  axis x line*=bottom,
  axis y line*=left,
  legend style={at={(0.97,0.03)}, anchor=south east, legend cell align=left, align=left, draw=white!15!black},
  legend columns=2
  ]
  \addplot [color=mycolor1, line width=1.0pt]
    table[]{simout-16.tsv};
  \addlegendentry{$d_{1,2}$~}
  
  \addplot [color=mycolor2, line width=1.0pt]
    table[]{simout-17.tsv};
  \addlegendentry{$d_{1,3}$}
  
  \addplot [color=mycolor3, line width=1.0pt]
    table[]{simout-18.tsv};
  \addlegendentry{$d_{2,3}$~}
  
  \addplot [color=black, dashed, line width=1.0pt]
    table[]{simout-19.tsv};
  \addlegendentry{$d_{\rm COLAV}$}
  
\end{axis}

\begin{axis}[%
width=65mm,
height=19.539mm,
at={(85mm,45.439mm)},
scale only axis,
xmin=0,
xmax=150,
xtick={0,50,100,150},
xticklabels={\empty},
ymin=-5.32018044501408,
ymax=2.72629468835089,
ylabel style={font=\color{white!15!black}},
ylabel={$\tilde{\sigma}_x$ [m]},
axis background/.style={fill=white},
title style={font=\bfseries,yshift=-0.5mm},
title={Formation-keeping error},
axis x line*=bottom,
axis y line*=left
]
\addplot[area legend, dashed, draw=black, fill=gray, fill opacity=0.5, forget plot]
table[] {simout-7.tsv}--cycle;

\addplot [color=mycolor1, line width=1.0pt, forget plot]
  table[]{simout-4.tsv};

\addplot [color=mycolor2, line width=1.0pt, forget plot]
  table[]{simout-5.tsv};

\addplot [color=mycolor3, line width=1.0pt, forget plot]
  table[]{simout-6.tsv};
\end{axis}

\begin{axis}[%
width=65mm,
height=19.539mm,
at={(85mm,22.72mm)},
scale only axis,
xmin=0,
xmax=150,
xtick={0,50,100,150},
xticklabels={\empty},
ymin=-20,
ymax=20,
ylabel style={font=\color{white!15!black}, xshift=2mm, yshift=-2mm},
ylabel={$\tilde{\sigma}_y$ [m]},
axis background/.style={fill=white},
axis x line*=bottom,
axis y line*=left,
legend style={at={(0.97,1.35)}, anchor=north east, legend cell align=left, align=left, draw=white!15!black}
]
\addplot[area legend, dashed, draw=black, fill=gray, fill opacity=0.5, forget plot]
table[] {simout-11.tsv}--cycle;

\addplot [color=mycolor1, line width=1.0pt]
  table[]{simout-8.tsv};
  \addlegendentry{Vehicle 1}

\addplot [color=mycolor2, line width=1.0pt]
  table[]{simout-9.tsv};
  \addlegendentry{Vehicle 2}

\addplot [color=mycolor3, line width=1.0pt]
  table[]{simout-10.tsv};
  \addlegendentry{Vehicle 3}

\end{axis}

\begin{axis}[%
width=65mm,
height=19.539mm,
at={(85mm,0mm)},
scale only axis,
xmin=0,
xmax=150,
xlabel style={font=\color{white!15!black},yshift=1mm},
xlabel={$t$ [s]},
ymin=-10,
ymax=20,
ylabel style={font=\color{white!15!black}, yshift=-2mm},
ylabel={$\tilde{\sigma}_z$ [m]},
axis background/.style={fill=white},
axis x line*=bottom,
axis y line*=left
]
\addplot[area legend, dashed, draw=black, fill=gray, fill opacity=0.5, forget plot]
table[] {simout-15.tsv}--cycle;

\addplot [color=mycolor1, line width=1.0pt, forget plot]
  table[]{simout-12.tsv};
\addplot [color=mycolor2, line width=1.0pt, forget plot]
  table[]{simout-13.tsv};
\addplot [color=mycolor3, line width=1.0pt, forget plot]
  table[]{simout-14.tsv};

\end{axis}

\end{tikzpicture}%
        \vspace{-1.5mm}
        \caption{Simulation results. The top-left plot shows the $x$-, $y$- and $z$-components of the path-following error $\mathbf{p}_b^p$, as defined in \eqref{eq:barycenter}. The bottom-left plot shows the distance between the vehicles ($d_{i,j} = \|\mathbf{p}_i - \mathbf{p}_j\|$). The plots on the right show the $x$-, $y$- and $z$-components of the formation-keeping error $\tilde{\bs{\sigma}} = \bs{\sigma}_2 - \bs{\sigma}_{d,2}$ with $\bs{\sigma}_2$ given by \eqref{eq:sigma_2} and $\bs{\sigma}_{d,2}$ given by \eqref{eq:sigma_d_2}. The grey rectangles mark the intervals when the COLAV task is active.}
        \label{fig:results}
        \vspace{-5mm}
    \end{figure*}
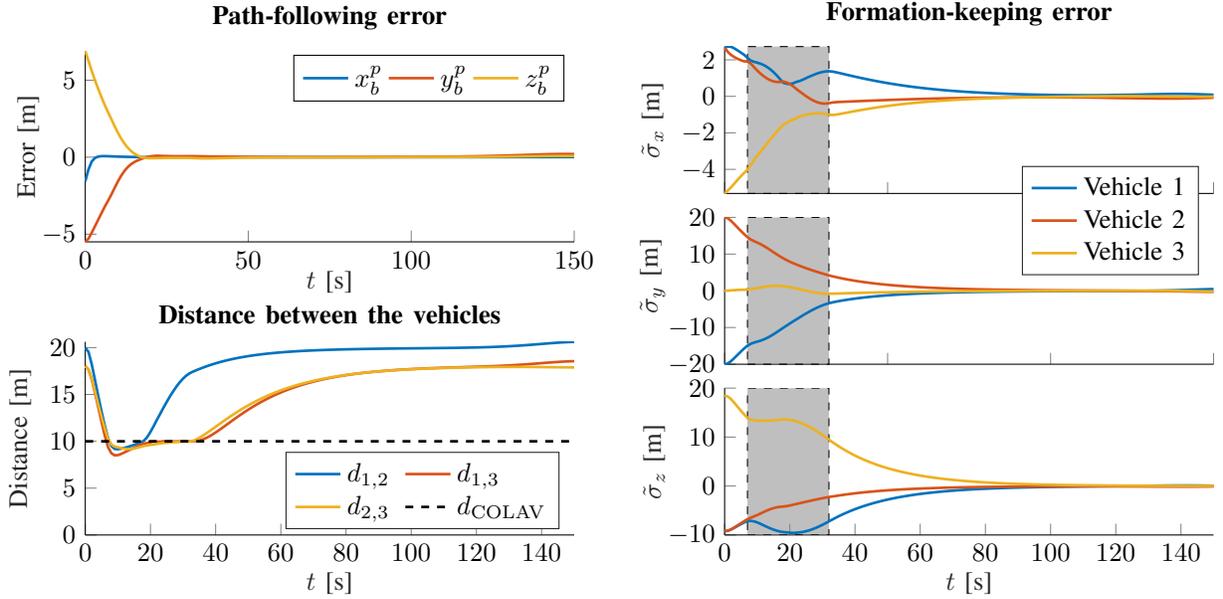

    In this section, we present the results of a numerical simulation of three LAUV vehicles \cite{sousa_LAUV_2012}.
    The parameters of the simulation are summarized in Table \ref{tab:params}.
    The barycenter should follow a spiral path given by
    \begin{equation}
        \mat{p}_p(\xi) = \left[\xi, a\,\cos(\omega\,\xi), b\,\sin(\omega\,\xi)\right]\T.
    \end{equation}
    The maximum curvature of this path is
    \begin{align}
        \max_{\xi \in \mathbb{R}} \abs{\kappa(\xi)} &= \frac{b\,\omega ^2}{\sqrt{a^2\,\omega ^2+1}}, &
        \max_{\xi \in \mathbb{R}} \abs{\iota(\xi)} &= a\,\omega ^2,
    \end{align}
    while the smallest absolute values of $Y_v / X_v$ and $Y_w / X_w$ for the LAUV model are approximately $0.26$.
    Consequently, the path is chosen such that the maximum curvature is
    \begin{align}
        \max_{\xi \in \mathbb{R}} \abs{\kappa(\xi)} &= 0.013, &
        \max_{\xi \in \mathbb{R}} \abs{\iota(\xi)} &= 0.040,
    \end{align}
    and \eqref{eq:curvature_limit} is satisfied.
    From \eqref{eq:lookahead_limit}, the lookahead distance must then satisfy $\Delta_0 > 4.29$.
    We choose $\Delta_0 = 5$, since smaller distances guarantee faster convergence.

    The very minimum relative distance to avoid collision is the length of the LAUV, \emph{i.e.} $2.4$ m.
    For additional safety, we design the COLAV task with $d_{\rm min} = 5$ m.
    To add a security zone during transients, $d_{\rm COLAV}$ is chosen to be $10$ m.

    The desired formation is an isosceles triangle parallel to the $yz$ plane.
    Specifically, the desired positions of the three vehicles are
    \begin{align}
        \mathbf{p}_{f,1}^f &= \begin{bmatrix} 0 \\ 10 \\ 5\end{bmatrix}, &
        \mathbf{p}_{f,2}^f &= \begin{bmatrix} 0 \\ -10 \\ 5\end{bmatrix}, &
        \mathbf{p}_{f,3}^f &= \begin{bmatrix} 0 \\ 0 \\ -10\end{bmatrix}.
    \end{align}

    \begin{table}[b]
        \centering
        \begin{tabular}[t]{r|l}
            {\bf Parameter} & {\bf Value} \\ \hline
            $k_u$ & $0.05$ \\
            $k_c$ & $0.1$ \\
            $k_{\theta}, k_{\psi}$ & $0.0625$ \\
            $k_q, k_r$ & $0.25$ \\
            $k_d$ & $0.1$ \\
            $\lambda_q, \lambda_r$ & $0.75$ \\
            $c_u$ & $5$ \\
            $c_q, c_r$ & $1$ \\
            $\bs{\Lambda}_1$ & $\mat{I}$ \\
            $\bs{\Lambda}_2$ & $0.05\,\mat{I}$
        \end{tabular}
        \begin{tabular}[t]{r|l}
            {\bf Parameter} & {\bf Value} \\ \hline
            $\mat{V}_c$ & $\left[0, 0.25, 0.05\right]\T$ \\
            $\Delta_0$ & $5$ \\
            $d_{\rm COLAV}$ & $10$ \\
            $U_{\rm LOS}$ & $1$ \\
            $k_{\xi}$ & $1$ \\
            $\mat{p}_0$ & $\mat{0}_3$ \\
            $a$ & $40$ \\
            $b$ & $20$ \\
            $\omega$ & $\pi / 100$
        \end{tabular}
        \caption{Simulation parameters}
        \label{tab:params}
    \end{table}

    The gains of the low-level control systems \eqref{eq:t_u},\eqref{eq:t_q},\eqref{eq:t_r} are chosen such that the settling time is approximately 10 seconds.
    The gains of the pitch and yaw PD controllers are chosen such that the closed-loop system is critically damped.

    The results of the numerical simulation are shown in Figures \ref{fig:results} and \ref{fig:trajectory}.
    The vehicles start in an inverted triangular formation.
    The COLAV task is briefly activated, and the distance between the vehicles drops to approximately 8 meters during the transient.
    Eventually, the vehicles resolve the situation and continue to converge to the desired path and formation.
    
    Note that while the COLAV task is active, the formation-keeping error is diverging.
    After resolving the situation, the formation-keeping error converges to zero exponentially.
    The rate of convergence is given by the formation-keeping gain $\bs{\Lambda}_2$.

    The path-following error seems to converge linearly at first, and then exponentially as the error gets smaller.
    This phenomenon is caused by the LOS guidance law \eqref{eq:gamma_LOS}, \emph{cf.} \cite{fossen_uniform_2014}, and the path parameter update law \eqref{eq:path_update}.
    The inverse tan in \eqref{eq:gamma_LOS} and the last term in \eqref{eq:path_update} act as a saturation, slowing the convergence for large errors.
    The rate of convergence of the along-track error ($x_b^p$) is given by the path parameter update gain $k_{\xi}$, while the rate of convergence of the cross-track errors ($y_b^p, z_b^p$) is given by the lookahead distance $\Delta_0$.

    \begin{figure}[t]
        \centering
        \def\svgwidth{.48\textwidth}
        \import{figures}{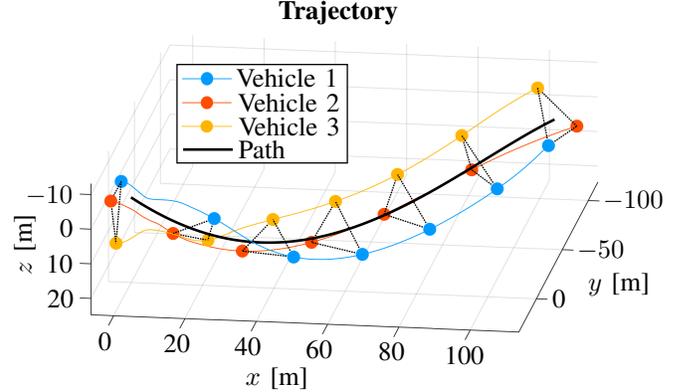}
        \vspace{-6.5mm}
        \caption{3D trajectory of the vehicles. The markers represent the position of the vehicles at times $t = 0, 25, 50, \ldots, 150$ seconds. Markers with corresponding times are connected by dotted lines to better illustrate the resulting formation.}
        \label{fig:trajectory}
        \vspace*{-8mm}
    \end{figure}

    \section{Conclusions and future work}
    \label{sec:conclusion}


    In this paper, we proposed a formation path-following method for an arbitrary number of AUVs, proved the stability of the path following part, and verified its effectiveness in simulations.

    Because the proposed algorithm is centralized, our method can only be used in scenarios where all the vehicles can communicate with each other.
    A decentralized version of the algorithm is a topic for future work.

    In the simulations, the formation-keeping error shows exponential convergence to zero.
    However, the stability of the formation-keeping task has not been theoretically proven.
    Proving the stability of this task is another potential topic for future work.

    \section*{ACKNOWLEDGMENTS}
    The authors would like to thank Damiano Varagnolo for the interesting discussions and inputs.

    This work was partly supported by the Research Council of Norway through project No. 302435 and the Centres of Excellence funding scheme, project No. 223254.


    \bibliographystyle{IEEEtran}
    \bibliography{biblio}

    \clearpage
    \onecolumn
    \appendices
    \section{Components of the dynamical equations}
    \label{app:components}
    \begin{subequations}
        \begin{align}
            F_u(\cdot) &= -\frac{d_{11}\,u+q\,\left(m_{34}\,q+m_{33}\,w\right)-r\,\left(m_{25}\,r+m_{22}\,v\right)}{m_{11}}, \\
            \bs{\phi}_u(\cdot) &= q\left(\frac{m_{33}}{m_{11}} - 1\right)\,\mat{r}_w + r\left(1 - \frac{m_{22}}{m_{11}}\right)\,\mat{r}_v + \frac{d_{11}}{m_{11}}\,\mat{r}_u, \\
            X_v(\cdot) &= - u_c - \frac{m_{55}\,\left(d_{25}+m_{11}\,u_{r}\right)-m_{25}\,\left(d_{55}+m_{25}\,u_{r}\right)}{m_{22}\,m_{55}-{m_{25}}^2}, \\
            Y_v(\cdot) &= - \frac{d_{22}\,m_{55}-m_{25}\,\left(d_{52}-u_{r}\,\left(m_{11}-m_{22}\right)\right)}{m_{22}\,m_{55}-{m_{25}}^2}, \\
            X_w(\cdot) &= u_c - \frac{m_{44}\,\left(d_{34}-m_{11}\,u_{r}\right)-m_{34}\,\left(d_{44}-m_{34}\,u_{r}\right)}{m_{33}\,m_{44}-{m_{34}}^2}, \\
            Y_w(\cdot) &= - \frac{d_{33}\,m_{44}-m_{34}\,\left(d_{43}+u_{r}\,\left(m_{11}-m_{33}\right)\right)}{m_{33}\,m_{44}-{m_{34}}^2}, \\
            G(\cdot) &= \frac{m_{34}\,mgz_g\,\sin\left(\theta \right)}{m_{33}\,m_{44}-{m_{34}}^2}, \\
            F_q(\cdot) &= \frac{m_{34}\,\left(d_{34}\,q+d_{33}\,w-q\,u\,\left(m_{11}-m_{33}\right)\right) - m_{33}\,\left(d_{44}\,q+d_{43}\,w+mgz_g\,\sin\left(\theta \right)+u\,w\,\left(m_{11}-m_{33}\right)\right)}{m_{33}\,m_{44}-{m_{34}}^2}, \\
            \bs{\phi}_q(\cdot) &= \frac{m_{34}\,\left(d_{33}\,\bs{\varphi}_{w}-\bs{\varphi}_{u}\,q\,\left(m_{11}-m_{33}\right)\right)-m_{33}\,\left(d_{43}\,\bs{\varphi}_{w}+\bs{\varphi}_{{uw}}\,\left(m_{11}-m_{33}\right)\right)}{m_{33}\,m_{44}-{m_{34}}^2}, \\
            F_r(\cdot) &= \frac{m_{25}\,\left(d_{25}\,r+d_{22}\,v+r\,u\,\left(m_{11}-m_{22}\right)\right)-m_{22}\,\left(d_{55}\,r+d_{52}\,v-u\,v\,\left(m_{11}-m_{22}\right)\right)}{m_{22}\,m_{55}-{m_{25}}^2},\\
            \bs{\phi}_r(\cdot) &= \frac{m_{25}\,\left(d_{22}\,\bs{\varphi}_{v}+\bs{\varphi}_{u}\,r\,\left(m_{11}-m_{22}\right)\right)-m_{22}\,\left(d_{52}\,\bs{\varphi}_{v}-\bs{\varphi}_{uv}\,\left(m_{11}-m_{22}\right)\right)}{m_{22}\,m_{55}-{m_{25}}^2},
        \end{align}
    \end{subequations}
    where $\left[\mat{r}_u, \mat{r}_v, \mat{r}_w\right] = \rot$, and
    \begin{equation}
        \bs{\varphi}_{ij}(\cdot) = \left[-j\,\mat{r}_i\T - i\,\mat{r}_j\T , r_{{i1}}\,r_{{j1}} , r_{{i2}}\,r_{{j2}} , r_{{i3}}\,r_{{j3}} , r_{{i1}}\,r_{{j2}}+r_{{i2}}\,r_{{j1}} , r_{{i1}}\,r_{{j3}}+r_{{i3}}\,r_{{j1}} , r_{{i2}}\,r_{{j3}}+r_{{i3}}\,r_{{j2}} \right]\T, \quad i, j \in \{u,v,w\}.
    \end{equation}

    \section{Derivations and Lemmas from Section~\ref{sec:path_stability}}
    \subsection{Derivation of Closed-Loop Barycenter Kinematics}
    \label{app:barycenter}
    We begin by taking $\dot{y}_b^p$ from \eqref{eq:y_pb}.
    \begin{align}
        \dot{y}_b^p &= \frac{1}{n}\sum_{i=1}^n U_i\,\cos\left(\gamma_i\right)\,\sin\left(\chi_i - \psi_p\right) - \dot{\xi}\,\iota\,x_b^p. \label{eq:y_pb_0}
    \end{align}
    Now, consider the term $\sin\left(\chi_i - \psi_p\right)$.
    The course of the vessel is given by
    \begin{align}
        \chi_i &= \psi_i + \beta_i, &
        \beta_i &= \arcsin\left(\frac{v_i}{U_i}\right).
    \end{align}
    After substituting and applying some trigonometric identities, we get
    \begin{subequations}
        \begin{align}
            \sin\left(\chi_i - \psi_p\right) &= \sin\left(\psi_i + \beta_i - \psi_p\right) = \cos\left(\psi_i - \psi_p\right)\,\sin\left(\beta_i\right) + \sin\left(\psi_i - \psi_p\right)\,\cos\left(\beta_i\right) \\
            &= \cos\left(\psi_i - \psi_p\right)\frac{v_i}{U_i} + \sin\left(\psi_i - \psi_p\right)\frac{\sqrt{u_i^2 + w_i^2}}{U_i}.
        \end{align}
    \end{subequations}
    Consequently, the term $U_i\,\cos\left(\gamma_i\right)\,\sin\left(\chi_i - \psi_p\right)$ is equivalent to
    \begin{equation}
        U_i\,\cos\left(\gamma_i\right)\,\sin\left(\chi_i - \psi_p\right) = \cos\left(\gamma_i\right) \left(\cos\left(\psi_i - \psi_p\right)v_i + \sin\left(\psi_i - \psi_p\right)\sqrt{u_i^2 + w_i^2}\right). \label{eq:y_pb_1}
    \end{equation}

    \noindent Now, consider a term $\sin\left(\psi_i + \beta_{d,i} - \psi_p\right)$.
    Using a similar procedure, we get
    \begin{equation}
        \sin\left(\psi_i + \beta_{d,i} - \psi_p\right) = \cos\left(\psi_i - \psi_p\right)\frac{v_i}{U_{d,i}} + \sin\left(\psi_i - \psi_p\right)\frac{\sqrt{u_{d,i}^2 + w_i^2}}{U_{d,i}}. \label{eq:y_pb_2}
    \end{equation}
    Combining \eqref{eq:y_pb_1} and \eqref{eq:y_pb_2}, we get
    \begin{equation}
        U_i\,\cos\left(\gamma_i\right)\,\sin\left(\chi_i - \psi_p\right) = U_{d,i}\,\cos\left(\gamma_i\right)\,\sin\left(\psi_i + \beta_{d,i} - \psi_p\right) + \cos\left(\gamma_i\right)\,\sin\left(\psi_i - \psi_p\right)\left(\sqrt{u_i^2 + w_i^2} - \sqrt{u_{d,i}^2 + w_i^2}\right). \label{eq:y_pb_3}
    \end{equation}
    Note that the following holds for the angles
    \begin{align}
        \psi_i + \beta_{d,i} - \psi_p &= \psi_{d,i} + \tilde{\psi}_i + \beta_{d,i} - \left(\psi_{d,i} + \beta_{d,i} + \beta_{\rm LOS}\right) = \tilde{\psi}_i - \beta_{\rm LOS}, &
        \beta_{\rm LOS} &= \scale[1]{\arctan\left(\frac{y_b^p}{\Delta\left(\mat{p}_b^p\right)}\right)}.
    \end{align}
    Therefore, their sine is given by
    \begin{equation}
        \sin\left(\psi_i + \beta_{d,i} - \psi_p\right) = \sin\left(\tilde{\psi}_i\right)\,\scale[1]{\frac{\Delta\left(\mat{p}_b^p\right)}{\sqrt{\Delta\left(\mat{p}_b^p\right)^2 + \left(y_b^p\right)^2}}} - \cos\left(\tilde{\psi}_i\right)\scale[1]{\frac{y_b^p}{\sqrt{\Delta\left(\mat{p}_b^p\right)^2 + \left(y_b^p\right)^2}}}.
        \label{eq:sin_psi_beta}
    \end{equation}
    Furthermore, note that the following holds for the flight-path angle
    \begin{equation}
        \gamma_i = \theta_i - \alpha_i = \tilde{\theta}_i + \theta_{d,i} - \alpha_i = \tilde{\theta}_i + \gamma_{\rm LOS} + \alpha_{d,i} - \alpha_i.
    \end{equation}
    Consequently, the cosine of the flight-path angle is equal to
    \begin{equation}
        \begin{split}
        \cos\left(\gamma_i\right) &= \cos\left(\gamma_{\rm LOS}\right)\cos\left(\tilde{\theta}_i\right)\cos\left(\alpha_{d,i} - \alpha_i\right) 
            - \cos\left(\gamma_{\rm LOS}\right)\sin\left(\tilde{\theta}_i\right)\sin\left(\alpha_{d,i} - \alpha_i\right) \\
            & \quad - \sin\left(\gamma_{\rm LOS}\right)\cos\left(\tilde{\theta}_i\right)\sin\left(\alpha_{d,i} - \alpha_i\right)
            - \sin\left(\gamma_{\rm LOS}\right)\sin\left(\tilde{\theta}_i\right)\cos\left(\alpha_{d,i} - \alpha_i\right)
        \end{split}
        \label{eq:cos_gamma_i}
    \end{equation}
    Using the equalities \eqref{eq:sin_psi_beta}, \eqref{eq:cos_gamma_i}, we can rewrite \eqref{eq:y_pb_3} as
    \begin{equation}
        U_i\,\cos\left(\gamma_i\right)\,\sin\left(\chi_i - \psi_p\right) = - U_{d,i}\cos\left(\gamma_{\rm LOS}\right)\scale[1]{\frac{y_b^p}{\sqrt{\Delta\left(\mat{p}_b^p\right)^2 + \left(y_b^p\right)^2}}} + G_{y,i}\left(\tilde{u}_i, \tilde{\psi}_i, \gamma_i, u_{d,i}, v_i, w_i, \mat{p}_b^p, \psi_p\right), \label{eq:y_pb_4}
    \end{equation}
    where
    \begin{equation}
        \begin{split}
            G_{y,i}(\cdot) &= \cos\left(\gamma_i\right)\,\sin\left(\psi_i - \psi_p\right)\left(\sqrt{u_i^2 + w_i^2} - \sqrt{u_{d,i}^2 + w_i^2}\right)
            - U_{d,i}\cos\left(\gamma_i\right)\,\sin\left(\tilde{\psi}_i\right)\scale[1]{\frac{\Delta\left(\mat{p}_b^p\right)}{\sqrt{\Delta\left(\mat{p}_b^p\right)^2 + \left(y_b^p\right)^2}}} \\
            &\quad + U_{d,i}\bigg[\sin\left(\gamma_{\rm LOS}\right)\left(\cos\left(\tilde{\theta}_i\right)\sin\left(\alpha_{d,i} - \alpha_i\right) + \sin\left(\tilde{\theta}_i\right)\cos\left(\alpha_{d,i} - \alpha_i\right)\right) \\
            & \qquad \qquad -\cos\left(\gamma_{\rm LOS}\right)\left(\cos\left(\tilde{\theta}_i\right)\cos\left(\alpha_{d,i} - \alpha_i\right) - 1\right)\bigg]\scale[1]{\frac{y_b^p}{\sqrt{\Delta\left(\mat{p}_b^p\right)^2 + \left(y_b^p\right)^2}}} 
        \end{split} \label{eq:G_y}
    \end{equation}

    \noindent Substituting \eqref{eq:y_pb_4} into \eqref{eq:y_pb_0}, we get the following
    \begin{equation}
        \begin{split}
            \dot{y}_b^p &= - \frac{1}{n}\sum_{i=1}^n U_{d,i}\cos\left(\gamma_{\rm LOS}\right)\scale[1]{\frac{y_b^p}{\sqrt{\Delta\left(\mat{p}_b^p\right)^2 + \left(y_b^p\right)^2}}} - \dot{\xi}\,\iota\,x_b^p \\
            &\quad + {G_y\left(\tilde{u}_1, \ldots, \tilde{u}_n, \tilde{\psi}_1, \ldots, \tilde{\psi}_n, \gamma_1, \ldots, \gamma_n, u_{d,1}, \ldots, u_{d,n}, v_1, \ldots, v_n, w_1, \ldots, w_n, \mat{p}_b^p, \psi_p\right)},
        \end{split}
    \end{equation}
    where
    \begin{equation}
        G_y(\cdot) = \frac{1}{n} \sum_{i=1}^n G_{y,i}\left(\tilde{u}_i, \tilde{\psi}_i, \gamma_i, u_{d,i}, v_i, w_i, \mat{p}_b^p, \psi_p\right).
    \end{equation}

    Now, we demonstrate a similar procedure for $\dot{z}_b^p$.
    From \eqref{eq:y_pb}, we get
    \begin{subequations}
        \begin{align}
            \dot{z}_b^p &= \frac{1}{n} \sum_{i=1}^n U_i \left(-\cos\left(\theta_p\right)\sin\left(\gamma_i\right) + \cos\left(\gamma_i\right)\sin(\theta_p)\cos\left(\psi_p-\chi_i\right)\right) + \dot{\xi}\,\kappa\,x_b^p \\
            &= \frac{1}{n} \sum_{i=1}^n U_i \left(-\sin\left(\gamma_i - \theta_p\right) - \left(1 - \cos\left(\chi_i - \psi_p\right)\right)\cos\left(\gamma_i\right)\sin(\theta_p)\right) + \dot{\xi}\,\kappa\,x_b^p. \label{eq:z_pb_1}
        \end{align} 
    \end{subequations}
    Once again, we consider the terms
    \begin{equation}
        \sin\left(\gamma_i - \theta_p\right) = \sin\left(\theta_i - \alpha_i - \theta_p\right) = \sin\left(\theta_i - \theta_p\right)\frac{u_i}{U_i} - \cos\left(\theta_i - \theta_p\right)\frac{w_i}{U_i},
    \end{equation}
    and
    \begin{equation}
        \sin\left(\theta_i - \alpha_{d,i} - \theta_p\right) = \sin\left(\theta_i - \theta_p\right)\frac{u_{d,i}}{U_{d,i}} - \cos\left(\theta_i - \theta_p\right)\frac{w_i}{U_{d,i}},
    \end{equation}
    which give us the following equality
    \begin{equation}
        U_i\,\sin\left(\gamma_i - \theta_p\right) = U_{d,i}\,\sin\left(\theta_i - \alpha_{d,i} - \theta_p\right) + \tilde{u}_i\,\sin\left(\theta_i - \theta_p\right).
    \end{equation}
    Using a similar trick, we can write the sine as
    \begin{equation}
        \sin\left(\theta_i - \alpha_{d,i} - \theta_p\right) = \sin\left(\tilde{\theta}_i - \alpha_{\rm LOS}\right) = \sin\left(\tilde{\theta}_i\right)\frac{\Delta\left(\mat{p}_b^p\right)}{\sqrt{\Delta\left(\mat{p}_b^p\right)^2 + \left(z_b^p\right)^2}} - \cos\left(\tilde{\theta}_i\right)\frac{\left(z_b^p\right)}{\sqrt{\Delta\left(\mat{p}_b^p\right)^2 + \left(z_b^p\right)^2}}
    \end{equation}
    Consequently, we can rewrite \eqref{eq:z_pb_1} as
    \begin{equation}
        \begin{split}
            \dot{z}_b^p &= - \frac{1}{n} \sum_{i=1}^n U_{d,i}\frac{z_b^p}{\sqrt{\Delta\left(\mat{p}_b^p\right)^2 + \left(z_b^p\right)^2}} + \dot{\xi}\,\kappa\,x_b^p \\
            & \quad + {G_z\left(\tilde{u}_1, \ldots, \tilde{u}_n, \tilde{\theta}_1, \ldots, \tilde{\theta}_n, \gamma_1, \ldots, \gamma_n, \chi_1, \ldots, \chi_n, u_{d,1}, \ldots, u_{d,n}, v_1, \ldots, v_n, w_1, \ldots, w_n, \mat{p}_b^p, \theta_p, \psi_p\right)},
        \end{split}
    \end{equation}
    where
    \begin{align}
        G_z(\cdot) &= \frac{1}{n} \sum_{i=1}^n G_{z,i}\left(\tilde{u}_i, \tilde{\theta}_i, \gamma_i, \chi_i, u_{d,i}, v_i, w_i, \mat{p}_b^p, \theta_p, \psi_p\right), \\
        \begin{split}
            G_{z,i}(\cdot) &= -U_i\left(\left(1 - \cos\left(\chi_i - \psi_p\right)\right)\cos\left(\gamma_i\right)\sin(\theta_p)\right) - \tilde{u}_i\,\sin\left(\theta_i - \theta_p\right) \\
            & \quad - \left(1 - \cos\left(\tilde{\theta}_i\right)\right)\frac{\left(z_b^p\right)}{\sqrt{\Delta\left(\mat{p}_b^p\right)^2 + \left(z_b^p\right)^2}} - U_{d,i}\sin\left(\tilde{\theta}_i\right)\frac{\Delta\left(\mat{p}_b^p\right)}{\sqrt{\Delta\left(\mat{p}_b^p\right)^2 + \left(z_b^p\right)^2}}.
        \end{split} \label{eq:G_z}
    \end{align}

    \subsection{Desired Pitch and Yaw Rate}
    For further calculations, we need to evaluate the desired pitch ($q_{d,i}$) and yaw ($r_{d,i}$) rates of the vessels.
    From \eqref{eq:theta_dot}, we get the following relation between the yaw rate and the derivative of the yaw angle
    \begin{equation}
        q_{d,i} = \dot{\theta}_{d,i}.
    \end{equation}
    Now, we consider the desired pitch angle from \eqref{eq:theta_d}.
    Since we are investigating the path following task, we substitute $\gamma_{\rm LOS}$ from \eqref{eq:gamma_LOS} for $\gamma_{{\rm NSB}, i}$.
    Differentiating \eqref{eq:theta_d} with respect to time yields
    \begin{subequations}
        \begin{align}
            q_{d,i} &= \dot{\theta}_p(\xi) + \frac{\Delta\left(\mat{p}_b^p\right)\,\dot{z}_b^p - z_b^p\,\dot{\Delta}\left(\mat{p}_b^p\right)}{\Delta\left(\mat{p}_b^p\right)^2 + \left(z_b^p\right)^2} + \frac{u_{d,i}\,\dot{w}}{u_{d,i}^2 + w_i^2} \\
            \begin{split}
                &= \dot{\xi}\,\kappa(\xi) + \frac{\Delta\left(\mat{p}_b^p\right)\left(\frac{1}{n} \sum_{j=1}^n U_{d,j}\frac{\left(z_b^p\right)}{\sqrt{\Delta\left(\mat{p}_b^p\right)^2 + \left(z_b^p\right)^2}} + \dot{\xi}\,\kappa\,x_b^p + G_z(\cdot)\right)}{\Delta\left(\mat{p}_b^p\right)^2 + \left(z_b^p\right)^2} \\
                &\quad + \frac{z_b^p\left(-k_{\xi}\frac{\left(x_b^p\right)^2}{\sqrt{1+\left(x_b^p\right)^2}} - \frac{1}{n}\sum_{j=1}^n U_{d,j}\left(\scale[1]{\frac{\cos\left(\gamma_{{\rm LOS},j}\right)^2\left(y_b^p\right)^2}{\sqrt{\Delta\left(\mat{p}_b^p\right)^2 + \left(y_b^p\right)^2}}} + \frac{\left(z_b^p\right)^2}{\sqrt{\Delta\left(\mat{p}_b^p\right)^2 + \left(z_b^p\right)^2}}\right) + y_b^p\,G_y(\cdot) + z_b^p\,G_z(\cdot)\right)}{\Delta\left(\mat{p}_b^p\right)\left(\Delta\left(\mat{p}_b^p\right)^2 + \left(z_b^p\right)^2\right)} \\
                &\quad +u_{d,i}\frac{X_w\left(u_{d,i}+\tilde{u}_i, u_c\right)\,q + Y_w\left(u_{d,i}+\tilde{u}_i, u_c\right)\left(w_i - w_c\right)}{u_{d,i}^2 + w_i^2}.
            \end{split}
        \end{align} \label{eq:q_d}
    \end{subequations}

    From \eqref{eq:psi_dot}, we get the following relation between the yaw rate and the derivative of the yaw angle
    \begin{equation}
        r_{d,i} = \dot{\psi}_{d,i}\,\cos\left(\theta_{d,i}\right).
    \end{equation}
    Substituting the time-derivative of \eqref{eq:psi_d}, we get
    \begin{subequations}
        \begin{align}
            r_{d,i} &= \left(\dot{\psi}_p(\xi) - \frac{\Delta\left(\mat{p}_b^p\right)\,\dot{y}_b^p - y_b^p\,\dot{\Delta}\left(\mat{p}_b^p\right)}{\Delta\left(\mat{p}_b^p\right)^2 + \left(y_b^p\right)^2} - \frac{\dot{v}}{\sqrt{U_{d,i}^2 - v_i^2}}\right)\,\cos\left(\theta_{d,i}\right) \\
            \begin{split}
                &= \left(\dot{\xi}\,\iota(\xi) - \frac{\Delta\left(\mat{p}_b^p\right)\left(\frac{1}{n} \sum_{j=1}^n U_{d,i}\frac{\cos\left(\gamma_{\rm LOS}\right)\left(y_b^p\right)}{\sqrt{\Delta\left(\mat{p}_b^p\right)^2 + \left(y_b^p\right)^2}} - \dot{\xi}\,\iota\,x_b^p + G_y(\cdot)\right)}{\Delta\left(\mat{p}_b^p\right)^2 + \left(y_b^p\right)^2} \right. \\
                &\qquad + \frac{y_b^p\left(-k_{\xi}\frac{\left(x_b^p\right)^2}{\sqrt{1+\left(x_b^p\right)^2}} - \frac{1}{n}\sum_{j=1}^n U_{d,i}\left(\scale[1]{\frac{\cos\left(\gamma_{\rm LOS}\right)^2\left(y_b^p\right)^2}{\sqrt{\Delta\left(\mat{p}_b^p\right)^2 + \left(y_b^p\right)^2}}} + \frac{\left(z_b^p\right)^2}{\sqrt{\Delta\left(\mat{p}_b^p\right)^2 + \left(z_b^p\right)^2}}\right) + y_b^p\,G_y(\cdot) + z_b^p\,G_z(\cdot)\right)}{\Delta\left(\mat{p}_b^p\right)\left(\Delta\left(\mat{p}_b^p\right)^2 + \left(y_b^p\right)^2\right)} \\
                &\qquad -\frac{X\left(u_{d,i}+\tilde{u}_i, u_c\right)\,r + Y\left(u_{d,i}+\tilde{u}_i, u_c\right)\left(v_i - v_c\right)}{\sqrt{u_{d,i}^2 + w_i^2}} \Bigg)\, \cos\left(\theta_{d,i}\right).
            \end{split}
        \end{align} \label{eq:r_d}
    \end{subequations}

    \subsection{Proof of Lemma \ref{lemma_1}}
    \label{app:lemma_1}
    In \cite{moe_LOS_2016}, it is shown that the error states \eqref{eq:u_tilde}--\eqref{eq:psi_tilde} are UGES and the ocean current estimate errors \eqref{eq:V_c_tilde}--\eqref{eq:theta_r_tilde} are bounded, which implies that \eqref{eq:u_tilde}--\eqref{eq:theta_r_tilde} are forward complete.
    Therefore, we only need to prove that the underactuated sway and heave dynamics \eqref{eq:v_dot}, \eqref{eq:w_dot} and the barycenter dynamics \eqref{eq:x_pb_CL}--\eqref{eq:z_pb_CL} are forward complete.

    First, let us consider the underactuated sway dynamics.
    From \eqref{eq:v_dot}, we get
    \begin{equation}
        \dot{v}_i = X_v\left(\tilde{u}_i + u_{d,i}, u_c\right)\,\left(\tilde{r}_i + r_{d,i}\right) + Y_v\left(\tilde{u}_i + u_{d,i}, u_c\right)\,\left(v_i - v_c\right),
    \end{equation}
    where $\tilde{r}_i = r_i - r_{d,i}$.
    Now, let us consider a Lyapunov function candidate
    \begin{equation}
        V_v(v_i) = \frac{1}{2} v_i^2. \label{eq:V_v}
    \end{equation}
    Its derivative along the trajectories of $v_i$ is
    \begin{equation}
        \dot{V}_v(v_i) = X_v\left(\tilde{u}_i + u_{d,i}, u_c\right)\,\left(\tilde{r}_i + r_{d,i}\right)\,v_i + Y_v\left(\tilde{u}_i + u_{d,i}, u_c\right)\,\left(v_i - v_c\right)\,v_i. \label{eq:V_v_dot}
    \end{equation}
    From the boudedness of $\tilde{\mat{X}}_{2,i}$, $\kappa(\xi)$, $\iota(\xi)$, $u_{d,i}$, $u_c$ and $v_c$, we can conclude that there exists some scalar $\beta_{v,0} > 0$ such that $\left\| \left[ \tilde{\mat{X}}_{2,i}\T, \kappa(\xi), \iota(\xi), u_{d,i}, u_c, v_c \right]\T \right\| \leq \beta_0$.
    Moreover, from \eqref{eq:r_d}, we can conclude that there exist some positive functions $a_r(\beta_{v,0})$ and $b_r(\beta_{v,0})$ such that
    \begin{equation}
        \abs{r_{d,i}} \leq a_r(\beta_{v,0})\,\abs{v_i} + b_r(\beta_{v,0}).
    \end{equation}
    Consequently, we can upper bound $\dot{V}_v(v_i)$ using the following expression
    \begin{equation}
        \dot{V}_v(v_i) \leq X_v\left(\tilde{u}_i + u_{d,i}, u_c\right)\left(\tilde{r}_i\,v_i + a_r(\cdot)v_i^2 + b_r(\cdot)v_i\right) + Y_v\left(\tilde{u}_i + u_{d,i}, u_c\right)\left(v_i^2 - v_c\,v_i\right).
    \end{equation}
    Using Young's inequality, we get
    \begin{subequations}
        \begin{align}
            \begin{split}
                \dot{V}_v(v_i) &\leq \left(X_v\left(\tilde{u}_i + u_{d,i}, u_c\right)\left(2 + a_r(\cdot)\right) + 2\,Y_v\left(\tilde{u}_i + u_{d,i}, u_c\right)\right)\,v_i^2 \\
                 & \quad + X_v\left(\tilde{u}_i + u_{d,i}, u_c\right)\left(\tilde{r}_i^2 + b_r(\cdot)^2\right) + Y_v\left(\tilde{u}_i + u_{d,i}, u_c\right)\,v_c^2
            \end{split} \\
            & \leq \alpha_v\,V_v(v_i) + \beta_v.
        \end{align}
    \end{subequations}
    Using the comparison lemma, we get
    \begin{equation}
        V_v\left(v_i(t)\right) \leq \left(V_v\left(v_i(t_0)\right) + \frac{\beta_v}{\alpha_v}\right)\,{\rm exp}\left(\alpha_v(t - t_0)\right) - \frac{\beta_v}{\alpha_v}.
    \end{equation}
    As $V_v(v_i)$ is defined for all $t > t_0$, it follows that $v_i$ is also defined for all $t > t_0$.
    The solutions of \eqref{eq:v_dot} thus fulfill the definition of forward completeness, as defined in \cite{angeli_forward_1999}.

    Now, let us consider the underactuated heave dynamics.
    From \eqref{eq:w_dot}, we get
    \begin{equation}
        \dot{w}_i = X_w\left(\tilde{u}_i + u_{d,i}, u_c\right)\,\left(\tilde{q}_i + q_{d,i}\right) + Y_w\left(\tilde{u}_i + u_{d,i}, u_c\right)\,\left(w_i - w_c\right) + G(\theta_i),
    \end{equation}
    where $\tilde{q}_i = q_i - q_{d,i}$.
    Similar to the previous paragraph, we consider a Lyapunov function candidate
    \begin{equation}
        V_w(w_i) = \frac{1}{2} w_i^2, \label{eq:V_w}
    \end{equation}
    whose derivative is
    \begin{equation}
        \dot{V}_w(w_i) = X_w\left(\tilde{u}_i + u_{d,i}, u_c\right)\,\left(\tilde{q}_i + q_{d,i}\right)\,w_i + Y_w\left(\tilde{u}_i + u_{d,i}, u_c\right)\,\left(w_i - w_c\right)\,w_i + G(\theta)\,w_i.
    \end{equation}
    From the boudedness of $\tilde{\mat{X}}_{2,i}$, $\kappa(\xi)$, $\iota(\xi)$, $u_{d,i}$, $u_c$ and $w_c$, we can conclude that there exists some scalar $\beta_0 > 0$ such that $\left\| \left[ \tilde{\mat{X}}_{2,i}\T, \kappa(\xi), \iota(\xi), u_{d,i}, u_c, w_c \right]\T \right\| \leq \beta_{w,0}$.
    Moreover, from \eqref{eq:q_d}, we can conclude that there exist some positive functions $a_q(\beta_{w,0})$ and $b_q(\beta_{w,0})$ such that
    \begin{equation}
        \abs{q_{d,i}} \leq a_q(\beta_{w,0})\,\abs{w_i} + b_q(\beta_{w,0}).
    \end{equation}
    Consequently, we can upper bound $\dot{V}_w(w_i)$ using the following expression
    \begin{equation}
        \dot{V}_w(w_i) \leq X_w\left(\tilde{u}_i + u_{d,i}, u_c\right)\left(\tilde{q}_i\,w_i + a_q(\cdot)w_i^2 + b_q(\cdot)w_i\right) + Y_w\left(\tilde{u}_i + u_{d,i}, u_c\right)\left(w_i^2 - w_c\,w_i\right) + G(\theta_i)\,w_i.
    \end{equation}
    Using Young's inequality, we get
    \begin{subequations}
        \begin{align}
            \begin{split}
                \dot{V}_w(w_i) &\leq \left(X_w\left(\tilde{u}_i + u_{d,i}, u_c\right)\left(2 + a_q(\cdot)\right) + 2\,Y_w\left(\tilde{u}_i + u_{d,i}, u_c\right) + 1\right)\,w_i^2 \\
                 & \quad + X_w\left(\tilde{u}_i + u_{d,i}, u_c\right)\left(\tilde{q}_i^2 + b_q(\cdot)^2\right) + Y_w\left(\tilde{u}_i + u_{d,i}, u_c\right)\,w_c^2 + G(\theta)^2
            \end{split} \\
            & \leq \alpha_w\,V_w(w_i) + \beta_w.
        \end{align}
    \end{subequations}
    Using the comparison lemma, we get
    \begin{equation}
        V_w\left(w_i(t)\right) \leq \left(V_w\left(w_i(t_0)\right) + \frac{\beta_w}{\alpha_w}\right)\,{\rm exp}\left(\alpha_w(t - t_0)\right) - \frac{\beta_w}{\alpha_w}.
    \end{equation}
    Using the same arguments as in the previous paragraph, we conclude that the solutions of \eqref{eq:w_dot} are forward complete.

    Finally, let us consider the barycenter dynamics.
    We use a Lyapunov function candidate
    \begin{equation}
        V_b(\mat{p}_b^p) = \frac{1}{2} \left(\left(x_b^p\right)^2 + \left(y_b^p\right)^2 + \left(z_b^p\right)^2\right),
    \end{equation}
    whose derivative along the solutions of \eqref{eq:x_pb_CL}--\eqref{eq:z_pb_CL} is
    \begin{subequations}
        \begin{align}
            \dot{V}_b\left(\mat{p}_b^p\right) &= -k_{\xi}\frac{\left(x_b^p\right)^2}{\sqrt{1 + \left(x_b^p\right)^2}} - \frac{1}{n}\sum_{i=1}^n U_{d,i} \left(
                \frac{\cos\left(\gamma_{\rm LOS}\right)^2\left(y_b^p\right)^2}{\sqrt{\Delta\left(\mat{p}_b^p\right)^2 + \left(y_b^p\right)^2}} +
                \frac{\left(z_b^p\right)^2}{\sqrt{\Delta\left(\mat{p}_b^p\right)^2 + \left(z_b^p\right)^2}}
            \right) + G_y(\cdot)\,y_b^p + G_z(\cdot)\,z_b^p \\
            &\leq G_y(\cdot)\,y_b^p + G_z(\cdot)\,z_b^p + \frac{1}{2} \left(x_b^p\right)^2.
        \end{align}
    \end{subequations}
    Using Young's inequality, we get
    \begin{equation}
            \dot{V}_b\left(\mat{p}_b^p\right) \leq \frac{1}{2} \left(\left(x_b^p\right)^2 + \left(y_b^p\right)^2 + \left(z_b^p\right)^2\right) + \frac{1}{2}\left(G_y(\cdot)^2 + G_z(\cdot)^2\right) = V_b\left(\mat{p}_b^p\right) + \frac{1}{2}\left(G_y(\cdot)^2 + G_z(\cdot)^2\right).
    \end{equation}
    Note that from \eqref{eq:G_y} and \eqref{eq:G_z}, we can conclude that there exist some positive function $\zeta_y(U_{d,1}, \ldots, U_{d,n})$ and $\zeta_z(U_{d,1}, \ldots, U_{d,n})$ such that
    \begin{align}
        \abs{G_y(\cdot)} \leq \zeta_y(\cdot) \left\| \left[\tilde{u}_1, \ldots, \tilde{u}_n, \tilde{\psi}_1, \ldots, \tilde{\psi}_n\right]\T \right\|, \\
        \abs{G_z(\cdot)} \leq \zeta_z(\cdot) \left\| \left[\tilde{u}_1, \ldots, \tilde{u}_n, \tilde{\theta}_1, \ldots, \tilde{\theta}_n\right]\T \right\|. \\
    \end{align}
    Consequently, there exists a class-$\mathcal{K}_{\infty}$ function $\zeta_p(\cdot)$ such that
    \begin{equation}
        \dot{V}_p\left(\mat{p}_b^p\right) \leq V_p\left(\mat{p}_b^p\right) + \zeta_p\left(v_1, \ldots, v_n, w_1, \ldots, w_n, \tilde{u}_1, \ldots, \tilde{u}_n, \tilde{\psi}_1, \ldots, \tilde{\psi}_n, \tilde{\theta}_1, \ldots, \tilde{\theta}_n\right).
    \end{equation}
    Since all the arguments of $\zeta_p(\cdot)$ are forward complete, Corollary 2.11 of \cite{angeli_forward_1999} is satisfied and the barycenter dynamics is forward complete, thus concluding the proof of Lemma~\ref{lemma_1}.

    \subsection{Proof of Lemma \ref{lemma_2}}
    \label{app:lemma_2}
    First, we consider the sway dynamics.
    We take the Lyapunov function candidate $V_v$ from \eqref{eq:V_v} and simplify its derivative by setting $\left[\tilde{\mat{X}}_1\T, \tilde{\mat{X}}_2\T\right] = \mat{0}\T$.
    \begin{equation}
        \dot{V}_v(v_i) = X_v\left(u_{d,i}, u_c\right)\,r_{d,i}\,v_i + Y_v\left(u_{d,i}, u_c\right)\,\left(v_i - v_c\right)\,v_i. \label{eq:V_v_dot_2}
    \end{equation}
    Next, we find an upper bound on $r_{d,i}\,v_i$.
    We substitute from \eqref{eq:r_d}, set $\left[\tilde{\mat{X}}_1\T, \tilde{\mat{X}}_2\T\right] = \mat{0}\T$ and collect all terms that grow linearly with $v_i$ to obtain the following expression
    \begin{align}
        \scale[0.96]{r_{d,i}\,v_i}\, &\scale[0.96]{= \left( v_i\left(1 + \frac{\Delta(\mat{p}_b^p)\,x_b^p}{\Delta(\mat{p}_b^p)^2 + \left(x_b^p\right)^2}\right) \iota(\xi) \frac{1}{n} \sum_{j=1}^n U_j\,\Omega_x(\gamma_j, \theta_p, \chi_j, \psi_p) + \frac{Y_v(u_{d,i}, u_c)}{\sqrt{u_{d,i}^2 + w_i^2}}v_i^2 \right) \cos(\theta_{d,i}) + F_v(u_{d,i}, \theta_{d,i}, u_c, v_c, v_i, w_i, r_i),} \\
        \scale[0.96]{F_v(\cdot)}\, &\scale[0.96]{= \frac{X_v(u_{d,i}, u_c)\,r_i - Y_v(u_{d,i}, u_c)\,v_c}{\sqrt{u_{d,i}^2 + w_i^2}} v_i \, \cos(\theta_{d,i}).}
    \end{align}
    We can bound this expression as
    \begin{subequations}
        \begin{align}
            \abs{r_{d,i}\,v_i} &\leq \frac{2}{n}\abs{v_i}\,\abs{\iota(\xi)}\sum_{j=1}^n\left(\abs{u_j} + \abs{v_j} + \abs{w_j}\right) + \abs{F_v(\cdot)} \\
            &\leq \frac{2}{n}\abs{\iota(\xi)}\,v_i^2 + \frac{2}{n}\abs{v_i}\,\abs{\iota(\xi)}\left(\sum_{j \in \{1, \ldots, n\} \setminus \{i\}}\bigl(\abs{u_j} + \abs{v_j} + \abs{w_j}\bigr) + \abs{u_i} + \abs{w_i}\right) + \abs{F_v(\cdot)},
        \end{align}
    \end{subequations}
    which we can substitute to \eqref{eq:V_v_dot_2} to obtain
    \begin{equation}
        \begin{split}
            \dot{V}_v(v_i) &\leq \left(X_v\left(u_{d,i}, u_c\right)\frac{2}{n}\abs{\iota(\xi)} + Y_v\left(u_{d,i}, u_c\right)\right)v_i^2 + \left(\frac{2}{n}\abs{v_i}\,\abs{\iota(\xi)}\sum_{j \in \{1, \ldots, n\} \setminus \{i\}}\bigl(\abs{u_j} + \abs{v_j} + \abs{w_j}\bigr) + \abs{u_i} + \abs{w_i}\right) \\
            & \quad + \left(\abs{F_v(\cdot)} - Y_v\left(u_{d,i}, u_c\right)\abs{v_c}\right) \abs{v_i}.
        \end{split}
    \end{equation}
    For a sufficiently large $v_i$, the quadratic term will dominate the linear term.
    Therefore, we can conclude that $v_i$ is bounded if 
    \begin{equation}
        X_v\left(u_{d,i}, u_c\right)\frac{2}{n}\abs{\iota(\xi)} + Y_v\left(u_{d,i}, u_c\right) < 0.
    \end{equation}
    Since $Y_v$ is assumed to be always negative, the inequality is satisfied if
    \begin{equation}
        \abs{\iota(\xi)} < \frac{n}{2}\abs{\frac{Y_v\left(u_{d,i}, u_c\right)}{X_v\left(u_{d,i}, u_c\right)}}.
    \end{equation}

    Now, we perform a similar procedure for the heave dynamics.
    We take the Lyapunov function candidate $V_w$ from \eqref{eq:V_w} and simplify its derivative by setting $\left[\tilde{\mat{X}}_1\T, \tilde{\mat{X}}_2\T\right] = \mat{0}\T$.
    \begin{equation}
        \dot{V}_w(w_i) = X_w\left(u_{d,i}, u_c\right)\,q_{d,i}\,w_i + Y_w\left(u_{d,i}, u_c\right)\,\left(w_i - w_c\right)\,w_i + G(\theta_i)\,w_i. \label{eq:V_w_dot}
    \end{equation}
    Next, we find an upper bound on $q_{d,i}\,w_i$.
    We substitute from \eqref{eq:q_d}, set $\left[\tilde{\mat{X}}_1\T, \tilde{\mat{X}}_2\T\right] = \mat{0}\T$ and collect all terms that grow linearly with $w_i$ to obtain the following expression
    \begin{align}
        \scale[0.96]{q_{d,i}\,w_i}\, &\scale[0.96]{= w_i\left(1 + \frac{\Delta(\mat{p}_b^p)\,x_b^p}{\Delta(\mat{p}_b^p)^2 + \left(x_b^p\right)^2}\right) \kappa(\xi) \frac{1}{n} \sum_{j=1}^n U_j\,\Omega_x(\gamma_j, \theta_p, \chi_j, \psi_p) + u_{d,i}\frac{Y_w(u_{d,i}, u_c)}{u_{d,i}^2 + w_i^2}w_i^2 + F(u_{d,i}, u_c, w_c, w_i, q_i),} \\
        \scale[0.96]{F(\cdot)}\, &\scale[0.96]{= u_{d,i}\frac{X_w(u_{d,i}, u_c)\,r_i - Y_w(u_{d,i}, u_c)\,w_c}{\sqrt{u_{d,i}^2 + w_i^2}} w_i.}
    \end{align}
    We can bound this expression as
    \begin{subequations}
        \begin{align}
            \abs{q_{d,i}\,w_i} &\leq \frac{2}{n}\abs{\kappa(\xi)}\,w_i^2 + \frac{2}{n}\abs{w_i}\,\abs{\kappa(\xi)}\left(\sum_{j \in \{1, \ldots, n\} \setminus \{i\}}\bigl(\abs{u_j} + \abs{v_j} + \abs{w_j}\bigr) + \abs{u_i} + \abs{v_i}\right) + \abs{F(\cdot)},
        \end{align}
    \end{subequations}
    which we can substitute to \eqref{eq:V_w_dot} to obtain
    \begin{equation}
        \begin{split}
            \dot{V}_w(w_i) &\leq \left(X_w\left(u_{d,i}, u_c\right)\frac{2}{n}\abs{\kappa(\xi)} + Y_w\left(u_{d,i}, u_c\right)\right)w_i^2 + \left(\frac{2}{n}\abs{w_i}\,\abs{\kappa(\xi)}\sum_{j \in \{1, \ldots, n\} \setminus \{i\}}\bigl(\abs{u_j} + \abs{v_j} + \abs{w_j}\bigr) + \abs{u_i} + \abs{w_i}\right) \\
            & \quad + \left(\abs{F(\cdot)} - Y_w\left(u_{d,i}, u_c\right)\abs{v_c} + \abs{G(\theta_i)}\right) \abs{w_i} + G(\theta_i)\,w_i.
        \end{split}
    \end{equation}
    For a sufficiently large $w_i$, the quadratic term will dominate the linear term.
    Therefore, we can conclude that $w_i$ is bounded if 
    \begin{equation}
        X_w\left(u_{d,i}, u_c\right)\frac{2}{n}\abs{\kappa(\xi)} + Y_w\left(u_{d,i}, u_c\right) < 0.
    \end{equation}
    Since $Y_w$ is assumed to be always negative, the inequality is satisfied if
    \begin{equation}
        \abs{\kappa(\xi)} < \frac{n}{2}\abs{\frac{Y_w\left(u_{d,i}, u_c\right)}{X_w\left(u_{d,i}, u_c\right)}},
    \end{equation}
    which concludes the proof of Lemma \ref{lemma_2}.

    \subsection{Proof of Lemma \ref{lemma_3}}
    \label{app:lemma_3}
    First, we consider the sway dynamics.
    We take the Lyapunov function candidate $V_v$ from \eqref{eq:V_v} and simplify its derivative by setting $\tilde{\mat{X}}_2 = \mat{0}$.
    \begin{equation}
        \dot{V}_v(v_i) = X_v\left(u_{d,i}, u_c\right)\,r_{d,i}\,v_i + Y_v\left(u_{d,i}, u_c\right)\,\left(v_i - v_c\right)\,v_i. \label{eq:V_v_dot_3}
    \end{equation}
    Next, we find an upper bound on $r_{d,i}\,v_i$.
    We substitute from \eqref{eq:r_d}, set $\tilde{\mat{X}}_2 = \mat{0}$ and collect all terms that grow linearly with $v_i$ to obtain the following expression
    \begin{align}
        \begin{split}
            {r_{d,i}\,v_i} &= \scale[1]{\left( v_i\left(1 + \frac{\Delta(\mat{p}_b^p)\,x_b^p}{\Delta(\mat{p}_b^p)^2 + \left(x_b^p\right)^2}\right) \iota(\xi) \frac{1}{n} \sum_{j=1}^n U_j\,\Omega_x(\gamma_j, \theta_p, \chi_j, \psi_p) 
            - \frac{y_b^p\,v_i\,\sum_{j=1}^n\left(\frac{\cos\left(\gamma_{\rm LOS}\right)y_b^p}{\sqrt{\Delta\left(\mat{p}_b^p\right)^2 + \left(y_b^p\right)^2}} + \frac{z_b^p}{\sqrt{\Delta\left(\mat{p}_b^p\right)^2 + \left(z_b^p\right)^2}}\right)}{n\,\Delta(\mat{p}_b^p)\left(\Delta(\mat{p}_b^p)^2 + \left(y_b^p\right)^2\right)} \right.} \\
            &\qquad \left. + \frac{v_i\,\Delta(\mat{p}_b^p)\sum_{j=1}^n\frac{\cos\left(\gamma_{\rm LOS}\right)y_b^p}{\sqrt{\Delta\left(\mat{p}_b^p\right)^2 + \left(y_b^p\right)^2}}}{n\,\left(\Delta(\mat{p}_b^p)^2 + (y_b^p)^2\right)} + \frac{Y_v(u_{d,i}, u_c)}{\sqrt{u_{d,i}^2 + w_i^2}}v_i^2 \right) \cos(\theta_{d,i}) + H_v(u_{d,i}, \theta_{d,i}, u_c, v_c, v_i, w_i, r_i, \mat{p}_b^p, \xi),
        \end{split} \\
        \begin{split}
            H_v(\cdot) & = \left(\left(1 + \frac{\Delta(\mat{p}_b^p)\,x_b^p}{\Delta(\mat{p}_b^p)^2 + \left(x_b^p\right)^2}\right)k_{\xi}\,\iota(\xi)\frac{x_b^p}{\sqrt{1+\left(x_b^p\right)^2}}
            - \frac{y_b^p\,k_{\xi}\,x_b^p}{\sqrt{1+\left(x_b^p\right)^2}\Delta(\mat{p}_b^p)\left(\Delta(\mat{p}_b^p)^2 + \left(y_b^p\right)^2\right)} \right. \\
            &\qquad \left. + \frac{X_v(u_{d,i}, u_c)\,r_i - Y_v(u_{d,i}, u_c)\,v_c}{\sqrt{u_{d,i}^2 + w_i^2}} \right) v_i \, \cos(\theta_{d,i}).
        \end{split}
    \end{align}
    We can bound this expression as
    \begin{subequations}
        \begin{align}
            \abs{r_{d,i}\,v_i} &\leq \left(\frac{2}{n}\abs{\iota(\xi)} + \frac{3}{n\,\Delta(\mat{p}_b^p)}\right)\abs{v_i}\,\sum_{j=1}^n\left(\abs{u_j} + \abs{v_j} + \abs{w_j}\right) + \abs{H_v(\cdot)} \\
            &\leq \left(\frac{2}{n}\abs{\iota(\xi)} + \frac{3}{n\,\Delta(\mat{p}_b^p)}\right)\,v_i^2 + \left(\frac{2}{n}\abs{\iota(\xi)} + \frac{3}{n\,\Delta(\mat{p}_b^p)}\right)\left(\sum_{j \in \{1, \ldots, n\} \setminus \{i\}}\bigl(\abs{u_j} + \abs{v_j} + \abs{w_j}\bigr) + \abs{u_i} + \abs{w_i}\right) \\
            &\quad + \abs{H_v(\cdot)},
        \end{align}
    \end{subequations}
    which we can substitute to \eqref{eq:V_v_dot_3} to obtain
    \begin{equation}
        \begin{split}
            \dot{V}_v(v_i) &\leq \left(X_v\left(u_{d,i}, u_c\right)\left(\frac{2}{n}\abs{\iota(\xi)} + \frac{3}{n\,\Delta(\mat{p}_b^p)}\right) + Y_v\left(u_{d,i}, u_c\right)\right)v_i^2 \\
            & \quad + \left(\frac{2}{n}\abs{\iota(\xi)} + \frac{3}{n\,\Delta(\mat{p}_b^p)}\right)\left(\sum_{j \in \{1, \ldots, n\} \setminus \{i\}}\bigl(\abs{u_j} + \abs{v_j} + \abs{w_j}\bigr) + \abs{u_i} + \abs{w_i}\right) \\
            & \quad + \left(\abs{H_v(\cdot)} - Y_v\left(u_{d,i}, u_c\right)\abs{v_c}\right) \abs{v_i}.
        \end{split}
    \end{equation}
    For a sufficiently large $v_i$, the quadratic term will dominate the linear term.
    Therefore, we can conclude that $v_i$ is bounded if 
    \begin{equation}
        X_v\left(u_{d,i}, u_c\right)\left(\frac{2}{n}\abs{\iota(\xi)} + \frac{3}{n\,\Delta(\mat{p}_b^p)}\right) + Y_v\left(u_{d,i}, u_c\right) < 0.
    \end{equation}
    From the definition of the lookahead distance \eqref{eq:delta}, this condition is satisfied if
    \begin{equation}
        \Delta_0 > \frac{3}{n\abs{\frac{Y_v\left(u_{d,i}, u_c\right)}{X_v\left(u_{d,i}, u_c\right)}} - 2\abs{\iota(\xi)}}.
    \end{equation}

    Now, we perform a similar procedure for the heave dynamics.
    We take the Lyapunov function candidate $V_w$ from \eqref{eq:V_w} and simplify its derivative by setting $\tilde{\mat{X}}_2 = \mat{0}$.
    \begin{equation}
        \dot{V}_w(w_i) = X_w\left(u_{d,i}, u_c\right)\,q_{d,i}\,w_i + Y_w\left(u_{d,i}, u_c\right)\,\left(w_i - w_c\right)\,w_i + G(\theta_i)\,w_i. \label{eq:V_w_dot_2}
    \end{equation}
    Next, we find an upper bound on $q_{d,i}\,w_i$.
    We substitute from \eqref{eq:q_d}, set $\tilde{\mat{X}}_2 = \mat{0}$ and collect all terms that grow linearly with $w_i$ to obtain the following expression
    \begin{align}
        \begin{split}
            {q_{d,i}\,w_i} &= \scale[1]{ w_i\left(1 + \frac{\Delta(\mat{p}_b^p)\,x_b^p}{\Delta(\mat{p}_b^p)^2 + \left(x_b^p\right)^2}\right) \kappa(\xi) \frac{1}{n} \sum_{j=1}^n U_j\,\Omega_x(\gamma_j, \theta_p, \chi_j, \psi_p) 
            - \frac{z_b^p\,w_i\,\sum_{j=1}^n\left(\frac{\cos\left(\gamma_{\rm LOS}\right)y_b^p}{\sqrt{\Delta\left(\mat{p}_b^p\right)^2 + \left(y_b^p\right)^2}} + \frac{z_b^p}{\sqrt{\Delta\left(\mat{p}_b^p\right)^2 + \left(z_b^p\right)^2}}\right)}{n\,\Delta(\mat{p}_b^p)\left(\Delta(\mat{p}_b^p)^2 + \left(z_b^p\right)^2\right)} } \\
            &\quad  + \frac{w_i\,\Delta(\mat{p}_b^p)\sum_{j=1}^n\frac{z_b^p}{\sqrt{\Delta\left(\mat{p}_b^p\right)^2 + \left(z_b^p\right)^2}}}{n\,\left(\Delta(\mat{p}_b^p)^2 + (z_b^p)^2\right)} + u_{d,i}\frac{Y_w(u_{d,i}, u_c)}{{u_{d,i}^2 + w_i^2}}w_i^2 + H_w(u_{d,i}, u_c, v_c, w_i, v_i, q_i, \mat{p}_b^p, \xi),
        \end{split} \\
        \begin{split}
            H_w(\cdot) & = \left(\left(1 + \frac{\Delta(\mat{p}_b^p)\,x_b^p}{\Delta(\mat{p}_b^p)^2 + \left(x_b^p\right)^2}\right)k_{\xi}\,\kappa(\xi)\frac{x_b^p}{\sqrt{1+\left(x_b^p\right)^2}}
            - \frac{y_b^p\,k_{\xi}\,x_b^p}{\sqrt{1+\left(x_b^p\right)^2}\Delta(\mat{p}_b^p)\left(\Delta(\mat{p}_b^p)^2 + \left(y_b^p\right)^2\right)} \right. \\
            &\qquad \left. + u_{d,i}\frac{X_w(u_{d,i}, u_c)\,r_i - Y_w(u_{d,i}, u_c)\,v_c}{{u_{d,i}^2 + w_i^2}} \right) w_i.
        \end{split}
    \end{align}
    We can bound this expression as
    \begin{subequations}
        \begin{align}
            \abs{q_{d,i}\,w_i} &\leq \left(\frac{2}{n}\abs{\kappa(\xi)} + \frac{3}{n\,\Delta(\mat{p}_b^p)}\right)\abs{w_i}\,\sum_{j=1}^n\left(\abs{u_j} + \abs{v_j} + \abs{w_j}\right) + \abs{H_w(\cdot)} \\
            \begin{split}
                &\leq \left(\frac{2}{n}\abs{\kappa(\xi)} + \frac{3}{n\,\Delta(\mat{p}_b^p)}\right)\,w_i^2 + \left(\frac{2}{n}\abs{\kappa(\xi)} + \frac{3}{n\,\Delta(\mat{p}_b^p)}\right)\left(\sum_{j \in \{1, \ldots, n\} \setminus \{i\}}\bigl(\abs{u_j} + \abs{v_j} + \abs{w_j}\bigr) + \abs{u_i} + \abs{w_i}\right) \\
                & \quad + \abs{H_w(\cdot)},    
            \end{split}            
        \end{align}
    \end{subequations}
    which we can substitute to \eqref{eq:V_w_dot_2} to obtain
    \begin{equation}
        \begin{split}
            \dot{V}_w(w_i) &\leq \left(X_w\left(u_{d,i}, u_c\right)\left(\frac{2}{n}\abs{\kappa(\xi)} + \frac{3}{n\,\Delta(\mat{p}_b^p)}\right) + Y_w\left(u_{d,i}, u_c\right)\right)w_i^2 \\
            & \quad + \left(\frac{2}{n}\abs{\kappa(\xi)} + \frac{3}{n\,\Delta(\mat{p}_b^p)}\right)\left(\sum_{j \in \{1, \ldots, n\} \setminus \{i\}}\bigl(\abs{u_j} + \abs{v_j} + \abs{w_j}\bigr) + \abs{u_i} + \abs{w_i}\right) \\
            & \quad + \left(\abs{H_w(\cdot)} - Y_w\left(u_{d,i}, u_c\right)\abs{v_c}\right) \abs{w_i}.
        \end{split}
    \end{equation}
    For a sufficiently large $w_i$, the quadratic term will dominate the linear term.
    Therefore, we can conclude that $w_i$ is bounded if 
    \begin{equation}
        X_w\left(u_{d,i}, u_c\right)\left(\frac{2}{n}\abs{\kappa(\xi)} + \frac{3}{n\,\Delta(\mat{p}_b^p)}\right) + Y_w\left(u_{d,i}, u_c\right) < 0.
    \end{equation}
    From the definition of the lookahead distance \eqref{eq:delta}, this condition is satisfied if
    \begin{equation}
        \Delta_0 > \frac{3}{n\abs{\frac{Y_w\left(u_{d,i}, u_c\right)}{X_w\left(u_{d,i}, u_c\right)}} - 2\abs{\kappa(\xi)}}.
    \end{equation}

\end{document}